\newif\ifDRAFT 
\theoremstyle{plain}
\newtheorem{theorem}{Theorem}[section]
\newtheorem{lemma}[theorem]{Lemma}
\newtheorem{observation}[theorem]{Observation}
\newtheorem{conjecture}{Conjecture}[section]
\theoremstyle{definition}
\newtheorem{definition}[theorem]{Definition}
\newtheorem{remark}[theorem]{Remark}
\crefname{equation}{Eqn.}{Eqns.}
\newcommand{\paren}[1]{\left( #1 \right)}
\newcommand{\sparen}[1]{\left[ #1 \right]}
\DeclareSymbolFont{bbold}{U}{bbold}{m}{n}
\DeclareSymbolFontAlphabet{\mathbbold}{bbold}
\newcommand{\C}{\mathcal{C}}
\newcommand{\E}{\mathbb{E}}
\newcommand{\F}{\mathcal{F}}
\newcommand{\poly}{\operatorname{poly}}
\newcommand{\dist}{\operatorname{dist}}
\newcommand{\conf}{\mathsf{conf}}
\title{Color Fault-Tolerant Spanners}
    \author{Anonymous Authors}
    \author{
     Asaf Petruschka\thanks{Supported by the European Research Council (ERC) under the European Union’s Horizon 2020 research and innovation programme, grant agreement No. 949083, and by the Israeli Science Foundation (ISF), grant 2084/18.}
     \qquad
     Shay Sapir\thanks{This research was partially supported by the Israeli Council for Higher Education (CHE) via the Weizmann Data Science Research Center.}
     \qquad
     Elad Tzalik\\
     Weizmann Institute of Science
     \\ \texttt{\{asaf.petruschka,shay.sapir,elad.tzalik\}@weizmann.ac.il}}
\date{}
\begin{document}

\maketitle

\pagenumbering{gobble}

\begin{abstract}

We initiate the study of spanners in arbitrarily vertex- or edge-colored graphs (with no ``legality'' restrictions), that are resilient to failures of \emph{entire color classes}.
When a color fails, all vertices/edges of that color crash.
An $f$-color fault-tolerant ($f$-CFT) $t$-spanner of an $n$-vertex colored graph $G$ is a subgraph $H$ 
that preserves distances up to factor $t$, even in the presence of at most $f$ color faults. 
This notion generalizes the well-studied $f$-vertex/edge fault-tolerant ($f$-V/EFT) spanners.
The size of an $f$-V/EFT spanner crucially depends on the number  $f$ of vertex/edge faults to be tolerated.
In the colored variants, even a single color fault can correspond to an unbounded number of vertex/edge faults.

The key conceptual contribution of this work is in showing that the size (number of edges) required by an $f$-CFT spanner is in fact comparable to its uncolored counterpart, with no dependency on the size of color classes.
We provide optimal bounds on the size required by $f$-CFT $(2k-1)$-spanners, as follows:
\begin{itemize}
    \item When vertices have colors, we show an upper bound of $O(f^{1-1/k} n^{1+1/k})$ edges.
    This precisely matches the (tight) bounds for $(2k-1)$-spanners resilient to $f$ \emph{individual} vertex faults [Bodwin et al., SODA 2018; Bodwin and Patel, PODC 2019].

    \item For colored edges, we show that $O(f n^{1+1/k})$ edges are always sufficient.
    Further, we prove this is tight, i.e., we
    provide an $\Omega(f n^{1+1/k})$ (worst-case) lower bound.
    The state-of-the-art bounds known for the corresponding uncolored setting of edge faults are (roughly) $\Theta(f^{1/2} n^{1+1/k})$ [Bodwin et al., SODA 2018; Bodwin, Dinitz and Robelle, SODA 2022].

    \item We also consider a \emph{mixed} model where both vertices and edges are colored.
    In this case, we show tight $\Theta(f^{2-1/k} n^{1+1/k})$ bounds.
\end{itemize}
Thus, CFT spanners exhibit an interesting phenomenon: while (individual) edge faults are ``easier" than vertex faults, edge-color faults are ``harder" than vertex-color faults.

Our upper bounds are based on a generalization of the blocking set technique of [Bodwin and Patel, PODC 2019] for analyzing the (exponential-time) greedy algorithm for FT spanners.
We complement them by providing \emph{efficient} constructions of CFT spanners with similar size guarantees, based on the algorithm of [Dinitz and Robelle, PODC 2020].

\end{abstract}

% \newpage
% \tableofcontents

\newpage
\pagenumbering{arabic}

\section{Introduction}

Let $G$ be an $n$-vertex weighted graph.
A \emph{spanner} of $G$ is a subgraph that approximately preserves distances.
Formally, a subgraph $H$ is a $t$-spanner of $G$ if $\dist_H (u,v) \leq t \cdot \dist_G (u,v)$ for all $u,v\in V$ (where $\dist_X$ denotes the shortest path distance in a graph $X$).
Since their introduction by Peleg and Ullman \cite{PelegU88} and Peleg and Sch\"{a}ffer \cite{PelegS89}, spanners have been found to be extremely useful for a wide variety of applications, including network tasks like routing and synchronization \cite{PelegU89a,ThorupZ01,Cowen01,CowenW04,PelegU88}, distance estimation \cite{ThorupZ05,BaswanaS04}, preconditioning linear systems \cite{ElkinEST08}, and many others.

In real-world scenarios, spanners are frequently employed in systems whose components are susceptible to occasional breakdowns.
It is therefore desirable to have spanners possessing resilience to such failures, leading to the notion of \emph{fault-tolerant (FT) spanners}.
Two extensively studied types of faults are \emph{edge faults} and \emph{vertex faults}.
Spanners in the edge/vertex fault-tolerant (E/VFT) settings are defined as follows:

\begin{definition}[E/VFT Spanner]
    An $f$-EFT (VFT) $t$-spanner of $G$ is a subgraph $H$ such that for every set $F$ of at most $f$ edges (vertices) in $G$, it holds that $H-F$ is a $t$-spanner of $G-F$.
\end{definition}

E/VFT spanners have received major interest in recent years; see, e.g.,
\cite{ChechikLPR10,DinitzK11,BodwinDPW18,BodwinP19,DinitzR20,BodwinDR21,BodwinDR22,Parter22} and references therein.

In this work, we introduce and study \emph{color fault tolerant} models.
The framework is the following: the edges or vertices are arbitrarily partitioned into classes, or equivalently, associated with colors, and $f$ such \emph{color classes} might fail.
Each of the E/VFT settings gets a colored counterpart setting, denoted E/V\underline{\textbf{C}}FT (the C stands for `color').
The standard E/VFT settings are obtained as special cases, by giving each edge/vertex a different color.

Faulty color classes have been used to model 
Shared Risk Resource Groups (SRRG), which arise in many practical contexts such as optical telecommunication networks and multi-layered networks; see~\cite{CoudertDPRV07,Kuipers12,ZPT11} and  references therein for a detailed discussion.
Previous work in this context mainly focused on colored variants of classical graph problems (and their hardness).
A notable such problem is \emph{diverse routing}, where the goal is to determine if two (or more) color disjoint paths between two vertices \cite{Hu03,EGBERRL03,MENTA02} exist (and if so,  finding them).
In such algorithmic contexts, replacing a dense colored graph by a sparse CFT spanner could potentially help to improve algorithmic performance, similarly to V/EFT spanners for uncolored graphs.

\paragraph{Sparse Connectivity Certificates.}
Connectivity certificates are subgraph notions that are closely related to FT spanners.
Intuitively, a \emph{$\lambda$-edge (or $\lambda$-vertex) connectivity certificate} serves as concise ``proof'' to establish the $\lambda$-edge/vertex connectivity of a graph $G$.
Formally, it is a subgraph $H$ of $G$ such that 
every two vertices $u,v$ are $\lambda$-edge/vertex connected in $G$ if and only if they are $\lambda$-edge/vertex connected in $H$.
The concept of connectivity certificates was presented by Nagamochi and Ibaraki \cite{NagamochiI92}, who showed the existence of
$\lambda$-connectivity certificates with at most $\lambda(n-1)$ edges for every $n$-vertex graph.
These play a crucial role in various graph algorithms 
\cite{Matula93,CheriyanKT93,KargerM97,GhaffariK13,ForsterNYSY20,LiNPSY21}.
Every $\lambda$-V/EFT $(2k-1)$-spanner, for every $1 \leq k < \infty$, is \emph{by definition} also a $\lambda$ connectivity certificate, as observed e.g.\ in \cite{Par19,Parter22}.
Such a spanner with $O(\lambda^{1-1/k} n^{1+1/k})$ edges always exists \cite{BodwinP19}, so choosing $k = \Theta(\log n)$ yields sparse $\lambda$-connectivity certificates of $O(\lambda n)$ size.  

Due to the min-cut max-flow theorem, edge/vertex-connectivity can be defined equivalently using cuts and flows.
In the colored case, this relationship no longer holds: the minimum set of colors whose removal disconnects two vertices can be arbitrarily larger than the maximum number of color-disjoint paths connecting them \cite{CoudertDPRV07}. 
Hence, ``color-connectivity certificates'' may take two different meanings for cuts and flows.
CFT Spanners can then be seen as color-connectivity certificates \emph{in the cut sense}.
To avoid confusion, we call the latter $\lambda$-CFT-connectivity certificates.
Similarly to their uncolored counterparts, such CFT-connectivity certificates may provide useful tools when dealing with colored graphs.
As this work focuses on the stronger notion of CFT spanners, (wort-case) optimal results on the size of CFT-connectivity certificates are implied as direct corollaries.

\paragraph{A Mixed Setting.}
We additionally consider another possible setting, which we call \emph{mixed}, that allows for both edge faults and vertex faults.
That is, in the uncolored setting, the faulty-set $F$ is a subset of $V(G) \cup E(G)$ of size $\leq f$.
In the colored setting, both edges and vertices have colors (and a color may appear on both vertices and edges).
We use the shorthand MFT for the uncolored mixed setting, and MCFT for its colored counterpart.
To the best of our knowledge, MFT spanners have not been explicitly defined,
but many known results for the E/VFT settings implicitly apply to them.

\subsection{Our Contribution}

We initiate the study of FT spanners supporting $f$ color faults in colored graphs.
A na\"ive approach for handling color faults is to treat them as many edge/vertex faults. 
If each color class contains at most $\Delta$ edges/vertices, then a $\Delta f$-FT spanner for edge/vertex faults is clearly  an $f$-\underline{\textbf{C}}FT spanner.
While this solution might be plausible for small $\Delta$, it is useless for large $\Delta$.
As the coloring is arbitrary, it may even be that $\Delta = \Omega(n)$.
Thus, the resulting spanner from this na\"ive approach might have $\Omega(n^2)$ edges even when $f=1$, i.e., when only a single color fault may occur.
Conceptually, the main contribution of our work is showing that in the realm of FT spanners, one can handle $f$ color faults almost as efficiently as $f$ \emph{individual} edge/vertex faults, without any dependence on $\Delta$.

Our main results regarding CFT spanners are:

\begin{theorem}\label{thm:main}
    For every $k,f\geq 1$, and every colored $n$-vertex graph $G$, there exists an $f$-CFT $(2k-1)$-spanner with the following size bounds:
    \begin{enumerate}
        \item $O(fn^{1+1/k})$ edges in the edge-colored setting, 
        \item $O(f^{1-1/k} n^{1+1/k})$ edges in the vertex-colored setting, and
        \item $O(f^{2-1/k} n^{1+1/k})$ edges in the mixed setting of both edge and vertex colors.
    \end{enumerate}
    Moreover, the above bounds are tight assuming the Erd\H{o}s Girth Conjecture.
\end{theorem}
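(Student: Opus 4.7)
The plan is to adapt the Bodwin--Patel blocking-set analysis of the greedy algorithm to color faults for the upper bounds, and to exhibit matching lower-bound instances built from Erd\H{o}s-type extremal high-girth graphs.

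\textbf{Upper bounds (greedy + random color sampling).} I would analyze the natural exponential-time greedy $f$-CFT $(2k-1)$-spanner: process $E(G)$ in nondecreasing weight and add $e=(u,v)$ iff there exists a set $F$ of $\leq f$ colors---necessarily avoiding the colors of $u,v$ in the vertex setting, of $e$ in the edge setting, or of all three in the mixed setting---such that $\dist_{H-F}(u,v) > (2k-1)\dist_{G-F}(u,v)$. As in \cite{BodwinP19}, this produces for each $e \in H$ a \emph{witness} color set $B_e$ of size $\leq f$ with the property that every $u$-$v$ path in $G-B_e$ other than $e$ has length $> (2k-1)w(e)$. Sample each color independently with probability $p := 1/f$ into a random set $S$, and declare $e$ \emph{surviving} if $B_e \cap S = \emptyset$ and the mandatory colors of $e$ (endpoints' colors, edge's own color, or both, as appropriate) all lie in $S$. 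The key structural claim is that the surviving subgraph $H_S$ has girth $>2k$: a shorter cycle through a surviving edge $e$ would supply an alternative $u$-$v$ path of length $\leq 2k-1$ using only $S$-colors, hence lying in $G-B_e$, contradicting the witness property of $B_e$.

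\textbf{Counting step.} Applying the Moore / Alon--Hoory--Linial bound yields $|H_S| \leq O(|V(H_S)|^{1+1/k})$ pointwise, so---using standard moment bounds on the binomially-distributed $|V(H_S)|$---we obtain $\E[|H_S|] \leq O(n^{1+1/k})$ in the edge-colored setting and $O((pn)^{1+1/k})$ in the vertex-colored and mixed settings. On the other hand, linearity of expectation gives $\E[|H_S|] \geq |H| \cdot p^{\alpha}(1-p)^f = \Omega(|H| p^{\alpha})$, where $\alpha$ counts the mandatory colors for survival: $\alpha=1$ (edge-colored), $\alpha=2$ (vertex-colored), $\alpha=3$ (mixed). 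Solving for $|H|$ yields $O(fn^{1+1/k})$, $O(f^{1-1/k}n^{1+1/k})$, and $O(f^{2-1/k}n^{1+1/k})$ respectively---the three different exponents in $f$ arise entirely from this case split in $\alpha$.

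\textbf{Lower bounds.} The vertex-colored bound follows by assigning each vertex its own color, which reduces $f$-CFT to $f$-VFT and invokes the tight (conditional) $\Omega(f^{1-1/k}n^{1+1/k})$ bound of \cite{BodwinP19}. For the edge-colored bound I would place $f+1$ edge-disjoint girth-$(>2k)$ graphs on a common $n$-vertex set, each with $\Omega(n^{1+1/k})$ edges (obtainable in the natural range $f = O(n^{1-1/k})$ by independently relabeling the vertices of a single Erd\H{o}s-extremal graph and pruning rare collisions), and color the $i$-th copy monochromatically. Any $f$-color fault leaves exactly one copy intact; its girth then forces the spanner to contain all of that copy's edges, giving $\Omega(fn^{1+1/k})$ in total. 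For the mixed bound, I would further blow up every vertex into an $f$-sized color class, with each original edge becoming a $K_{f,f}$ between the two classes; this contributes an extra $f^{1-1/k}$ factor after renormalizing the vertex count, producing $\Omega(f^{2-1/k}n^{1+1/k})$.

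\textbf{Main obstacle.} The technical crux is the girth claim on $H_S$: the survival definition must precisely match the witness property of $B_e$ in each of the three settings, and if any mandatory color is omitted from the definition then the path-survival argument breaks. The most delicate lower-bound instance is the mixed one, where the vertex-blowup and edge-color blowup must be combined in a way that is simultaneously verifiably tight and keeps the graph simple.
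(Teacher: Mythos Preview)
Your overall framework---greedy spanner, witness/blocking sets, random color sampling, Moore bounds---matches the paper, and your ECFT analysis and all three lower-bound sketches are essentially the paper's arguments. However, there is a real gap in the VCFT and MCFT upper bounds, precisely at the step you label ``standard moment bounds on the binomially-distributed $|V(H_S)|$.''

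The random variable $|V(H_S)|$ is \emph{not} binomially distributed: it equals $\sum_{c} n_c X_c$ where $n_c$ is the size of color class $c$ and $X_c \sim \mathrm{Bernoulli}(p)$. The $n_c$'s are arbitrary and can be $\Omega(n)$, so a single sampled color can bring in $\Omega(n)$ vertices. You need $\E\big[|V(H_S)|^{1+1/k}\big] = O((pn)^{1+1/k})$, but since $x \mapsto x^{1+1/k}$ is convex, Jensen's inequality points the wrong way, and indeed the bound fails: with one color class of size $n$, $\E\big[|V(H_S)|^{1+1/k}\big] = p\,n^{1+1/k}$, which exceeds $(pn)^{1+1/k}$ by a factor $p^{-1/k} = f^{1/k}$. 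Plugging the only trivially valid bound $|V(H_S)| \le n$ into your scheme yields $|H| = O(f^2 n^{1+1/k})$ in the VCFT setting, not $O(f^{1-1/k} n^{1+1/k})$.

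The paper's fix is a structural reduction you are missing: first discard monochromatic edges (there are only $O(n^{1+1/k})$ of them, since they form a high-girth subgraph on their own), then pass to a \emph{color-bipartite} subgraph $H'$ with sides $L,R$ such that every color class lies entirely in one side, losing at most half the edges. Now $|L_S|$ and $|R_S|$ are \emph{independent}, and the bipartite Moore bound gives $\hat{m} = O\big(|L_S|^{1/k}|R_S| + |L_S||R_S|^{1/k}\big)$. Since $x \mapsto x^{1/k}$ is \emph{concave}, Jensen now goes the right way: $\E[|L_S|^{1/k}] \le (\E|L_S|)^{1/k} = (p|L|)^{1/k}$, and combined with independence this yields $\E[\hat m] = O((pn)^{1+1/k})$, after which your $\alpha$-counting finishes the argument. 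The MCFT case needs an additional clean-up lemma bounding the edges whose endpoints share a vertex color, which the paper handles by reducing that subgraph to the ECFT analysis.
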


\Cref{table:size-bounds} presents the known bounds for non-colored FT-spanners, compared to the color settings.
The bounds of \Cref{thm:main} are based on the \emph{exponential-time} FT greedy algorithm. 
We further provide polynomial-time constructions of CFT spanners with nearly optimal size guarantees (up to $\poly(k)$ factors).
We obtain $\lambda$-CFT connectivity certificates as immediate corollaries:
\begin{theorem}
    Every colored $n$-vertex graph $G$ has a subgraph $H$ such that, for every two vertices $u,v$, and every set $C$ of at most $\lambda$ colors, $u,v$ are connected in $H-C$ iff they are connected in $G-C$, where
    $|E(H)| = O(\lambda n)$ if the graph has either colored edges or colored vertices (but not both), and
    $|E(H)| = O(\lambda^2 n)$ when both vertices and edges are colored.
    This is (worst-case) optimal.
\end{theorem}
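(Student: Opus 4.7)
The plan is to derive both the upper and lower bounds directly from the corresponding results on CFT spanners in \Cref{thm:main}. The key reduction is the observation that every $f$-CFT $t$-spanner $H$ with any \emph{finite} stretch $t$ is automatically an $f$-CFT connectivity certificate: for any set $C$ of at most $f$ colors, if $u,v$ are connected in $G-C$ then $\dist_{G-C}(u,v)<\infty$, and the spanner guarantee $\dist_{H-C}(u,v)\leq t\cdot \dist_{G-C}(u,v)$ implies that $u,v$ are also connected in $H-C$. The converse direction is immediate, since $H\subseteq G$ gives $H-C\subseteq G-C$.

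Applying \Cref{thm:main} with $f=\lambda$ and $k=\lceil \log_2 n\rceil$ (so that $n^{1/k}\leq 2$ and $n^{1+1/k}=O(n)$) then yields $|E(H)|=O(\lambda n^{1+1/k})=O(\lambda n)$ in the edge-colored setting, $|E(H)|=O(\lambda^{1-1/k}n^{1+1/k})=O(\lambda n)$ in the vertex-colored setting, and $|E(H)|=O(\lambda^{2-1/k}n^{1+1/k})=O(\lambda^2 n)$ in the mixed setting.

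For the matching lower bounds in the two singly-colored settings, the plan is to reduce to the classical uncolored case by taking instances where every edge (resp.\ every vertex) receives its own private color; a $\lambda$-CFT connectivity certificate is then just a $\lambda$-EFT (resp.\ $\lambda$-VFT) connectivity certificate. A disjoint union of $\Theta(n/\lambda)$ copies of $K_{\lambda+2}$ is $(\lambda+1)$-connected, which forces minimum degree $\geq \lambda+1$ in any certificate and hence $\Omega(\lambda n)$ edges. The main obstacle is the mixed lower bound of $\Omega(\lambda^2 n)$, which does not follow from such a one-line reduction. The approach there is to build a family of mixed-colored graphs in which a single edge-color fault removes a bundle of $\Theta(\lambda)$ atomic edges around a vertex and a single vertex-color fault removes a bundle of $\Theta(\lambda)$ atomic vertices, so that resisting all $\lambda$-color faults requires maintaining $\Omega(\lambda)$ redundant copies of the $\Omega(\lambda n)$ witnesses that are already needed in the singly-colored regime. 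Verifying that such a bundling can be realized as a simple graph while relying only on the \emph{connectivity} requirement (rather than on the large-girth lower-bound constructions used for \Cref{thm:main}, which require short cycles to be absent) is the step I expect to be the most delicate.
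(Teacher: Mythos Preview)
Your upper-bound derivation is exactly what the paper does: the theorem is stated as an ``immediate corollary'' of \Cref{thm:main}, obtained by setting $k=\Theta(\log n)$ so that $n^{1/k}=O(1)$. Your singly-colored lower bounds via disjoint copies of $K_{\lambda+2}$ with distinct colors per edge/vertex are also correct.

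The gap is the mixed $\Omega(\lambda^2 n)$ lower bound. Your sketch does not name a concrete graph, and the heuristic that one needs ``$\Omega(\lambda)$ redundant copies of the $\Omega(\lambda n)$ witnesses'' is not yet an argument. The clean fix is to notice that the MCFT construction of \Cref{thm:mixed-color-spanner-lower-bound} is a black-box template: it takes \emph{any} bipartite edge-colored graph $G_{EC}$ with no proper $\lambda$-ECFT certificate, and outputs a mixed-colored graph on $\lambda\,|V(G_{EC})|$ vertices with $\lambda^2\,|E(G_{EC})|$ edges and no proper $3\lambda$-MCFT certificate --- the isolating step ``fail $2(\lambda-1)$ vertex colors to expose one copy of $G_{EC}$'' uses nothing about girth or stretch. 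So you only need a connectivity-hard $G_{EC}$: take $\lambda+1$ edge-disjoint spanning trees of a bipartite host on $n$ vertices, each tree receiving its own edge color. Then $|E(G_{EC})|=\Theta(\lambda n)$, and failing the other $\lambda$ colors leaves a single tree in which every edge is a bridge, so no edge of $G_{EC}$ can be dropped. Plugging this $G_{EC}$ into the template gives a simple graph on $n'=\lambda n$ vertices with $\Theta(\lambda^2 n')$ edges and no proper $3\lambda$-MCFT connectivity certificate, unconditionally --- precisely the ``connectivity rather than large-girth'' adjustment you anticipated but did not carry out.
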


\begin{table}
\centering
\scalebox{1.05}{
\begin{tabular}{@{}ccccc@{}}
\toprule
\textbf{Setting} & \textbf{Upper bound}                                                                                                                  & \textbf{Ref.}        & \textbf{Lower bound}                             & \textbf{Ref.}          \\ \midrule\midrule
EFT     & \begin{tabular}[c]{@{}l@{}}$O_k (f^{\tfrac{1}{2}-\tfrac{1}{2k}} \cdot n^{1+\tfrac{1}{k}} + fn)$, odd $k$\\ $O_k (f^{\tfrac{1}{2}} \cdot n^{1+\tfrac{1}{k}} + fn)$, even $k$\end{tabular} & \cite{BodwinDR22} & $\Omega(f^{\tfrac{1}{2}-\tfrac{1}{2k}} \cdot n^{1+\tfrac{1}{k}} + fn)$ & \cite{BodwinDPW18} \\ 
ECFT    & $O(f\cdot n^{1+\tfrac{1}{k}})$   & \textbf{New} & $\Omega(f \cdot n^{1+\tfrac{1}{k}})$         & \textbf{New}       \\ \midrule
VFT     & \multirow{2}{*}{$O(f^{1-\tfrac{1}{k}} \cdot n^{1+\tfrac{1}{k}})$}                                                                                                      & \cite{BodwinP19}  & \multirow{2}{*}{$\Omega(f^{1-\tfrac{1}{k}}\cdot n^{1+\tfrac{1}{k}})$}            & \multirow{2}{*}{\cite{BodwinDPW18}} \\ 
VCFT    &    & \textbf{New} &              &  \\ \midrule
MFT     & $O(f^{1-\tfrac{1}{k}} \cdot n^{1+\tfrac{1}{k}})$                                                                                                      & \cite{BodwinP19}  & $\Omega(f^{1-\tfrac{1}{k}} \cdot n^{1+\tfrac{1}{k}})$            & \cite{BodwinDPW18} \\
MCFT    & $O(f^{2-\tfrac{1}{k}} \cdot n^{1+\tfrac{1}{k}})$ & \textbf{New} & $\Omega(f^{2-\tfrac{1}{k}} \cdot n^{1+\tfrac{1}{k}})$ & \textbf{New}       \\ \bottomrule
\end{tabular}
}
\caption{Bounds on the worst-case size required by $f$-FT $(2k-1)$-spanners.
All lower bounds are conditioned on the Erd\H{o}s Girth Conjecture \cite{Erdos63}.
The lower bounds in the ECFT and MCFT settings cannot exceed $\Omega(n^2)$ for simple graphs, but we omit the $\min\{\cdot, n^2\}$ expression for ease of reading.
The bounds in the MFT settings are implicit by the cited works.
}\label{table:size-bounds}
\end{table}

\paragraph{On FT Spanners for Multi-Graphs.}
Interestingly, the bounds we provide for \emph{simple} colored graphs match the bounds required by E/V/MFT spanners for uncolored \emph{multi-graphs}.
All of our results apply both to simple and multi-graphs. 
This is in contrast to the uncolored setting, where ``multi'' FT-spanners are denser than ``simple'' ones.
See discussion in \Cref{sect:multi-graphs}.

\paragraph{Color Lists.}
Consider the following generalization of the color-faults model: Instead of having a single color per vertex/edge, each of them now has a \emph{list} of colors to which it is sensitive.
That is, the failure of any single color from the list makes the vertex/edge crash.
Notice that each of the E/V/MCFT setting is obtained as a special case by restricting the sizes of the lists; e.g., the ECFT setting is obtained by restricting edges to have lists of length 1, and the vertices to have empty 0-length lists.
In fact, this generalized setting is also considered in the context of Shared Risk Resource Groups (SRRG); see \cite{CoudertDPRV07} and references therein. 
By studying this variant, we obtain the following generalization of \Cref{thm:main}:

\begin{theorem}\label{thm:color-lists}
    Fix constant integers $\mu, \nu \geq 0$.
    Let $G$ be an $n$-vertex graph, where each edge (resp., vertex) has a list of at most $\mu$ (resp., $\nu$) colors, such that the failure of \emph{any} color in the list makes the edge (resp., vertex) fail.
    Then, there exist an $f$-CFT $(2k-1)$-spanner with $O(f^{\mu + \nu(1-1/k)} n^{1+1/k})$ edges.
    Moreover, this is tight assuming the Erd\H{o}s Girth Conjecture.
\end{theorem}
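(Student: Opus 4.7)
My plan is to adapt both halves of the proof of \Cref{thm:main} to the list setting, generalizing the CFT-greedy construction and the Erd\H{o}s-girth-based lower bound constructions.

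For the upper bound, I would run the natural $f$-CFT greedy algorithm in the list setting: process edges of $G$ in nondecreasing weight order, and include $e=(u,v)$ iff there exists a set $C$ of at most $f$ colors such that in $H - C$ — obtained by removing every vertex/edge whose color list intersects $C$ — the distance between $u$ and $v$ exceeds $(2k-1)\cdot w(e)$. This yields an $f$-CFT $(2k-1)$-spanner by construction. To bound the number of edges, I would extend the blocking-set analysis underlying \Cref{thm:main}: associate to each spanner edge a canonical witness consisting of an alternate path together with a minimal blocking color set, and count the pairs (spanner edge, witness). Each edge (resp.\ vertex) on a witness path carries a list of up to $\mu$ (resp.\ $\nu$) colors, so any single blocking element now admits up to $\mu$ (resp.\ $\nu$) distinct ``responsible'' color choices. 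Since the bound in \Cref{thm:main} is polynomial in $f$ (with the edge contribution appearing as $f^1$ and the vertex contribution as $f^{1-1/k}$), the list multiplicity enters only through the exponent on $f$: the edge part becomes $f^\mu$, and the rerouting argument responsible for the $f^{1/k}$ savings on vertex faults gives $f^{\nu(1-1/k)}$, for a combined bound of $O(f^{\mu + \nu(1-1/k)} n^{1+1/k})$.

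For the lower bound, I would combine the ECFT and VCFT extremal constructions (implicit in \Cref{thm:main}) multiplicatively. Starting from the conjectural girth-$(2k+2)$ Erd\H{o}s graph $\Gamma$, I would blow up each edge and vertex into list-colored gadgets encoding, respectively, $f^\mu$ independent edge-color choices and $f^{\nu(1-1/k)}$ independent vertex-color choices. An $f$-CFT $(2k-1)$-spanner must contain a fault-surviving copy of each edge of $\Gamma$ under each of the $f^{\mu + \nu(1-1/k)}$ resulting fault patterns, yielding the matching $\Omega(f^{\mu+\nu(1-1/k)}n^{1+1/k})$ lower bound.

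The main obstacle is the upper-bound bookkeeping: I need to fix a canonical notion of a minimum blocking set (e.g.\ lexicographically smallest among those of minimum size), so that each spanner edge is charged to a unique witness without over-counting across coinciding list entries, and verify that the rerouting/savings step of \cite{BodwinP19} — which produces the $f^{1/k}$ gain on vertex faults — remains valid when each faulted vertex arises from a $\nu$-element list rather than a single color. Once this is set up, the averaging and girth conclusion mirror the single-color case, so the main innovation is the careful bookkeeping rather than any new analytic ingredient.
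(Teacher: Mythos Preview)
Your high-level plan matches the paper's, but the upper-bound sketch has a real gap. The blocking-set analysis of \cite{BodwinP19} is not a witness-counting or ``rerouting'' argument: one samples each color independently with probability $p\approx 1/f$, forms the surviving subgraph $\hat H_S$ (minus edges in surviving blocking pairs), notes it has girth $>2k$, and compares a Moore-type upper bound on $\E[|E(\hat H_S)|]$ against a direct lower estimate of the form $\Omega(m \cdot p^{\cdots})$. The $f^{-1/k}$ savings in the vertex case arises simply because $\hat H_S$ has $\approx pn$ vertices, so the Moore bound gives $(pn)^{1+1/k}$ rather than $n^{1+1/k}$. There is no rerouting step to ``verify'', and no canonicality issue with the blocking set --- it is just the $\le f$ fault colors recorded at each greedy insertion, so the ``main obstacle'' you identify is a red herring.

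The obstacle you do not name is that in the list setting, sampling colors does \emph{not} give independent vertex survivals, so one cannot argue that $\hat H_S$ has $\approx p^\nu n$ vertices with any useful concentration. The paper resolves this with two ingredients your plan lacks. First, a clean-up phase deletes all \emph{conflicting} edges --- those whose two endpoints share a list color --- and bounds their number by \emph{induction on $\nu$}: inside the subgraph $H_c$ on vertices whose list contains $c$, one may drop $c$ from every vertex list, obtaining a $(\mu,\nu{-}1)$-instance with the inherited blocking set. Second, after clean-up, a random assignment of each color to a side $L$ or $R$ yields a color-bipartition in which $|L_S|$ and $|R_S|$ are \emph{independent}; applying bipartite Moore bounds and Jensen gives $\E[\hat m] = O((p^\nu n)^{1+1/k})$, and comparing with the easy lower estimate $\E[\hat m] = \Omega(m/f^{\mu+2\nu})$ yields the exponent $\mu + \nu(1-1/k)$. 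Your plan, which reads off the exponent by pattern-matching the single-color bounds, skips exactly the step --- independence via bipartition, set up by the inductive clean-up --- that makes the argument go through.
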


The bounds of \Cref{thm:main} are obtained by taking $\mu, \nu \in \{0,1\}$.
The proof is a relatively straightforward extension of the proofs for the E/V/MCFT settings, using inductive arguments;
it is given in \Cref{sect:color-lists}.

\subsection{Technical Overview}\label{sect:technical_overview}

\paragraph{Upper bounds (\Cref{sect:upper-bounds}).}
To prove size upper bounds
for CFT spanners, we build upon the powerful yet strikingly simple \emph{blocking set} technique of Bodwin and Patel~\cite{BodwinP19}.
This is a method for analyzing the spanner $H$ given by the (exponential-time) \emph{FT greedy algorithm}, used in \cite{BodwinP19} to obtain optimal bounds in the VFT setting.
For clarity of exposition, we focus our discussion here only on the corresponding vertex-colored setting.

Bodwin and Patel define a blocking set as a collection of edge-vertex pairs, such that each short ($\leq 2k$) cycle in $H$ contains one of the pairs.
On an intuitive level, a graph that admits a small blocking set is close to having high girth, forcing it to be sparse.
The FT greedy algorithm naturally induces a blocking set of size $f |E(H)|$, which ultimately yields the upper bound on the size of $H$ (as explained next).
The trivial adaption of this algorithm to the CFT settings then leads to the ``right'' extension of a blocking set: a collection of edge-\textbf{color} pairs such that for every short cycle $\Sigma$, at least one pair has both its edge and its color appearing on $\Sigma$. 

The analysis of $|E(H)|$ is by considering a random subgraph of $H$ induced on a fraction of $\approx 1/f$ vertices, and removing the edges of surviving blocking-set-pairs.
The resulting subgraph, denoted $\hat{H}_S$, has $O(n/f)$ vertices and large girth, and therefore $O((n/f)^{1+1/k})$ edges according to the folkore \emph{Moore Bounds}.
On the other hand, direct calculations show that $\Omega( |E(H)|/f^2 )$ edges are expected to survive, hence $|E(H)| = O(f^{1-1/k}n^{1+1/k})$.
Again, there seems to be a natural adaptation for (vertex) colors: sample a random fraction of $\approx 1/f$ \emph{color classes}.
The problem that arises is that even though $\hat{H}_S$ has $O(n/f)$ vertices \emph{in expectation}, it is quite probable to have $\Omega(n)$ vertices (say, if a color class of size $\Omega(n)$ is sampled).
More conceptually, the problem is that vertices cannot be sampled independently, preventing us from using concentration bounds.
Na\"ively carrying out the same analysis with a trivial $O(n)$ bound on the number of sampled vertices yields only an $O(f^2 n^{1+1/k})$ bound on the size of the spanner $H$.

We show that even in the independence-deprived circumstances of the VCFT setting, the same bounds of the VFT setting can be obtained.
Our analysis is carefully designed to harness the little independence we do have, between different color classes.
It has three main ingredients.
\begin{itemize}
    \item[(1)] We first show that effectively, we may assume that greedy spanner $H$ is \emph{color-bipartite}: a bipartite graph where each color class is fully contained in one of the sides.

    \item[(2)] To complement this, we use a ``symmetric'' version of the Moore Bounds: a bipartite graph with sides $L,R$ and girth $> 2k$ has $O(|L||R|^{1/k}+|R||L|^{1/k})$ edges.%
    \footnote{Stronger bipartite Moore Bounds exist~\cite{Hoory02}, but this weaker bound suffices for our needs.}

    \item[(3)] Combining the independence between the color classes \emph{from different sides} with the Moore Bounds, we show that the random graph $\hat{H}_S$ has $O((n/f)^{1+1/k})$ edges in expectation.
\end{itemize}

\paragraph{Lower bounds (\Cref{sect:lower-bounds}).}
Our lower bounds, as most spanner lower bounds, are conditioned on the famous and widely believed Erd\H{o}s Girth Conjecture \cite{Erdos63},
implying the existence of an $n$-vertex graph $G_{ER}$ with $\Omega(n^{1+1/k})$ edges that has no proper subgraph as a $(2k-1)$-spanner.

For edge-colored graphs, our strategy is to pack $f$ copies of an $n$-vertex Erd\H{o}s graph $G_{ER}$, denoted $G_1, \dots, G_f$, on the same set of vertices, with little overlap.
We then consider the union graph $\bigcup_{i} G_i$ (ignoring overlaps), where each edge originating from $G_i$ is given the color $i$.
This graph has $\Omega(fn^{1+1/k})$ edges, and no proper $f$-ECFT $(2k-1)$-spanner.

In the mixed-colored setting, we ``encode'' $f^2$ disjoint $f$-ECFT lower bound instances in a graph with $fn$ properly colored vertices, such that each instance can be isolated by failing $2f$ vertex-colors. Thus, this graph has no proper $3f$-MCFT spanner.

\paragraph{Efficient algorithms (\Cref{sect:efficient-constructions}).}
We show efficient CFT spanner constructions based on the technique of Dinitz and Robelle \cite{DinitzR20}.
Their approach was to modify the greedy algorithm in the E/VFT settings to have polynomial running time, while ensuring it still induces a small blocking set, only larger by an $O(k)$ factor than that of the ``vanilla'' FT greedy algorithm.
Thus, the size analysis based on blocking sets still applies.
Our (existential) upper bounds generalize this last component 
to colored settings.
We then augment the \cite{DinitzR20} algorithm to produce small \emph{colored} blocking sets, and thus achieve efficient constructions.

\paragraph{Remark on Optimality Beyond the Girth Conjecture.}
Let $\gamma(n,2k)$ denote the maximum number of edges in an $n$-vertex graph of girth $> 2k$.
The Moore Bounds show that  $\gamma(n,2k) = O(n^{1+1/k})$, and Erd\H{o}s' Girth Conjecture is that this is tight up to a constant factor.
Our lower bounds can be formulated purely in terms of the $\gamma$ function: $\Omega(f \cdot \gamma(n,2k))$ edges in the ECFT setting, $\Omega(f^2 \cdot \gamma(n/f, 2k))$ edges in VCFT setting, and $\Omega(f^3 \cdot \gamma(n/f, 2k))$ edges in MCFT setting.
As for the upper bounds:
In the ECFT case, the upper bound matches the lower bound also in terms of $\gamma$.
In the VCFT and MCFT settings, the upper bounds are matching assuming only much weaker conditions on $\gamma$ than the Girth Conjecture; it is enough to assume that $\gamma(n, 2k) = \Theta(n \cdot \beta(n,2k))$, where $\beta(\cdot, 2k)$ is \emph{concave} (the Girth Conjecture is that $\beta(n,2k) = n^{1/k}$).

\section{Preliminaries: A Unified FT Framework}\label{sect:prelim}
We now present a unified formal framework for all six E/V/M(C)FT settings, to help avoid tedious repetitions later on.

Henceforth, assume every edge/vertex $x \in E(G) \cup V(G)$ is associated with a color $c(x) \in \C$, where $\C$ is some finite set called the \emph{color palette}. 
The coloring is arbitrary, and there are no ``legality'' restrictions.
For example, two neighboring vertices may share the same color.
It is also allowed that the same color appears both on edges and on vertices.
For $c \in \C$, we denote the sets of $c$-colored edges and vertices by $E_c(G) = \{e \in E(G) \mid c(e) = c\}$ and $V_c (G) = \{v \in V(G) \mid c(v) = c\}$.
The edges of $G$ are weighted, where $w(e) = w(u,v)$ denotes the weight of edge $e = \{u,v\}$.

The \emph{fault universe} $\F$ is defined separately for each of the fault-tolerant settings:
\begin{table}[H]
\centering
\begin{tabular}{@{}ccccccc@{}}
\toprule
\textbf{Setting} & EFT    & VFT    & MFT              & ECFT                   & VCFT                     & MCFT                             \\ \midrule
$\F = $    & $E(G)$ & $V(G)$ & $E(G) \cup V(G)$ & $\{E_c(G)\}_{ c\in \C}$ & $ \{V_c(G)\}_{ c\in \C}$ & $\{E_c(G) \cup V_c(G)\}_{ c\in \C}$ \\ \bottomrule
\end{tabular}
\end{table}
We say that an edge $e = \{u,v\} \in E(G)$ is \emph{damaged} by $x \in \F$ if the failure of $x$ causes $e$ to fail.
Formally, $x$ damages $e$ if: $x=e$ in EFT, $x \in \{u,v\}$ in VFT, $x \in \{e,u,v\}$ in MFT, $e \in x$ in ECFT, $\{u,v\} \cap x \neq \emptyset$ in VCFT, and $\{e,u,v\} \cap x \neq \emptyset$ in MCFT.
Note that in the colored settings, there is a one-to-one correspondence between the fault universe $\F$ and the set of colors $\C$. 
So, when $S \subseteq \F$, we say that $S$ contains the color $c$, and abuse notation by writing $c \in S$, when $S$ contains the element corresponding to $c$ from the fault universe $\F$ (which depends on the setting). 
We denote by $G-S$ the subgraph of $G$ obtained by deleting all edges damaged by some $x \in S$, and by $G[S] \coloneqq G - (\F - S)$, which is the subgraph obtained by keeping all edges that can only be damaged by faults from $S$.

We now formally state the general FT spanner definition, applying to all settings:

\begin{definition}[General FT Spanner]
    An $f$-FT $t$-spanner of $G$ with respect to a fault universe $\F$ (defined by the given setting) is a subgraph $H$ such that for every $F \subseteq \F$ with $|F|\leq f$, it holds that $H-F$ is a $t$-spanner of $G-F$.
\end{definition}

\section{Upper Bounds}\label{sect:upper-bounds}

The FT variant of the greedy spanner algorithm of Alth\''ofer et al.~\cite{AlthoferDDJS93}, presented as \Cref{alg:greedy-spanner}, has been central in the study of spanners in general, and FT spanners in particular.
\begin{algorithm}[H]
\caption{$\mathsf{FTGreedySpanner}(G,k,f)$}\label{alg:greedy-spanner}

\begin{algorithmic}[1]
    \State $H \gets (V(G), \emptyset)$
    \For{each $e = \{u,v\} \in E(G)$ in increasing order of weight}
        \If{there is $F \subseteq \F$, $|F| \leq f$, not damaging $e$, such that $\dist_{H-F} (u,v) > (2k-1)w(e)$}
            \State $H \gets H \cup \{e\}$
        \EndIf
    \EndFor
    \State \Return $H$
\end{algorithmic}
\end{algorithm}

Aside from its immediate correctness, the greedy approach has been proven to produce strong and often optimal upper bounds on the (worst-case) size of E/VFT spanners (e.g., see 
\cite{BodwinP19,BodwinDR22}),
making it an ``immediate suspect'' to investigate for getting upper bounds on the size of CFT spanners as well.
The major downside of the na\"ive FT greedy algorithm is its exponential-in-$f$ running time, which makes it intractable for many algorithmic applications.
However, it is often the case that modified polynomial-time greedy approaches (with suitably modified analysis) can be devised to produce very close size bounds \cite{BodwinDR21,DinitzR20}.

In this section, we provide existential size bounds on CFT spanners via the naive FT greedy algorithm, using the \emph{blocking set} technique of Bodwin and Patel \cite{BodwinP19}.
We show:

\begin{theorem}\label{thm:CFT-upper-bounds}
    The number of edges $m$ in the graph $H$ constructed by the FT greedy algorithm (\Cref{alg:greedy-spanner}) can be bounded as follows:
    \begin{enumerate}
        \item In the ECFT setting, $m = O(f n^{1+1/k})$.
        \item In the VCFT setting, $m = O(f^{1-1/k} n^{1+1/k})$.
        \item In the MCFT setting, $m = O(f^{2-1/k} n^{1+1/k})$
    \end{enumerate}
\end{theorem}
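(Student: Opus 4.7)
The plan is to adapt the blocking-set technique of Bodwin--Patel~\cite{BodwinP19} to all three colored settings, using the greedy algorithm's witness sets together with random sampling of the color palette. For every edge $e = \{u,v\}$ that \Cref{alg:greedy-spanner} adds to $H$, the greedy rule guarantees a \emph{witness} $F_e \subseteq \F$ with $|F_e| \leq f$, not damaging $e$, such that $\dist_{H - F_e}(u,v) > (2k-1) w(e)$. I would collect these into the \emph{blocking set} $B = \{(e,x) : e \in E(H),\, x \in F_e\}$, of size at most $f \cdot |E(H)|$. By the standard heaviest-edge argument applied to a cycle $\Sigma$ of length $\leq 2k$, the witness $F_e$ of the heaviest edge $e$ on $\Sigma$ must contain some $x$ that damages another edge of $\Sigma$, so $B$ is \emph{blocking}: every short cycle in $H$ contains a pair $(e,x) \in B$ with $e$ on the cycle and $x$ damaging some element of the cycle.

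For ECFT, I would sample each edge-color independently with probability $p = 1/(2f)$, keep only edges whose color is sampled, and further delete any edge $e$ for which some $(e,x) \in B$ has $x$ also sampled. The blocking property immediately forces the resulting graph $\hat H_S$ to have girth $> 2k$, so the Moore bound gives $|E(\hat H_S)| = O(n^{1+1/k})$. On the other hand, since $F_e$ never contains $c(e)$, the events that $c(e)$ is sampled and that $x$ is sampled are independent for each $(e,x) \in B$; a union bound over the $\leq f$ blocking pairs of $e$ shows that $e$ survives in $\hat H_S$ with probability $\Omega(p) = \Omega(1/f)$, which yields $|E(H)| = O(f\,n^{1+1/k})$.

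The VCFT case is more delicate: sampling vertex-colors with probability $1/f$ leaves an unconcentrated vertex set (a single large color class can cause $\Theta(n)$ vertices to survive instead of the expected $\Theta(n/f)$), and a naive Moore bound loses an extra factor of $f$. My plan is to reduce to the \emph{color-bipartite} case where $V(H) = L \cup R$ and each color class lies fully in one side. Monochromatic edges (both endpoints of color $c$) are dealt with separately: any witness $F_e$ of such an edge $e$ avoids $c$, so the subgraph of all monochromatic-$c$ edges has girth $> 2k$, giving $O(|V_c|^{1+1/k})$ such edges per color and $O(n^{1+1/k})$ in total by convexity. The bichromatic edges reduce to the color-bipartite case via a uniformly random color-bipartition that retains each bichromatic edge across the cut with probability $1/2$. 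In the color-bipartite case, sampling each color independently with probability $p = \Theta(1/f)$ produces independent sampled subsets $L', R'$ with expectations $p|L|, p|R|$. Plugging into the bipartite Moore bound $O(|L'||R'|^{1/k} + |R'||L'|^{1/k})$, taking expectation, and using independence plus Jensen's inequality for the concave $x \mapsto x^{1/k}$ gives $\mathbb{E}[|E(\hat H_S)|] = O(p^{1+1/k} n^{1+1/k})$; combined with per-edge survival probability $\Omega(p^2)$, this yields $|E(H)| = O(f^{1-1/k} n^{1+1/k})$.

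Finally, MCFT follows by fusing both samplings: every color (edge- or vertex-) is sampled independently with probability $p = \Theta(1/f)$, and an edge is kept only if its edge-color and both endpoint vertex-colors are all sampled. This multiplies the per-edge survival probability by an extra $p$ (now $\Omega(p^3) = \Omega(1/f^3)$), while the Moore-bound side is unchanged since only vertex colors affect $|L'|$ and $|R'|$, producing $|E(H)| = O(f^{2-1/k} n^{1+1/k})$ after the analogous color-bipartite reduction. The main obstacle throughout is the VCFT step: the loss of independence in colored sampling obstructs the direct Bodwin--Patel analysis, and the color-bipartite reduction, coupled with the bipartite Moore bound and Jensen's inequality, is the device that restores enough structure to recover the optimal $f^{1-1/k}$ exponent instead of $f^2$.
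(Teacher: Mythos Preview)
Your proposal is essentially the paper's proof: the same blocking-set construction, the same $p$-random color sampling, the color-bipartite reduction combined with the bipartite Moore bound and Jensen in the VCFT case, and the same fusion for MCFT.

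One point to tighten in MCFT: the ``analogous color-bipartite reduction'' must first dispose of the vertex-monochromatic edges (those with $c(u)=c(v)$), and here the VCFT girth argument you used does \emph{not} carry over verbatim. In a cycle of such edges, a blocking pair $(e,x)$ can have $x$ equal to an \emph{edge} color on the cycle, so the monochromatic-$c$ subgraph need not have girth $>2k$. The fix (which the paper uses, and which fits naturally into your per-color framework) is to observe that the blocking set restricted to the monochromatic-$c$ subgraph acts only through edge colors, i.e., it is an ECFT blocking set; your ECFT bound then gives $O(f\,|V_c|^{1+1/k})$ such edges per vertex-color $c$, and $O(f\,n^{1+1/k})$ in total by the same convexity step. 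This is within the target $O(f^{2-1/k} n^{1+1/k})$, and the rest of your MCFT argument goes through.
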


\subsection{Blocking Sets and Random Blocked Subgraphs}

Denote by $H$ the output of \Cref{alg:greedy-spanner}, and by $m$ its number of edges.
The size analysis of $H$ hinges on a generalized definition of blocking sets, capturing all six E/V/M(C)FT settings.

\begin{definition}[Blocking Set]\label{def:blocking-set}
    A $2k$-\emph{blocking set} of $H$ is a subset  $B \subseteq E(H) \times \F$ that sastifies the following conditions:
    \begin{enumerate}
        \item[(i)] If $(e,x) \in B$, then $e$ is not damaged by $x$.
        \item[(ii)] For every cycle $\Sigma$ in $H$ on $\leq 2k$ edges, there exist $(e,x) \in B $ such that $e \in \Sigma$, and some (other) edge of $\Sigma$ is damaged by $x$. Abusing notation, we denote this condition by $(e,x) \in B \cap \Sigma$.
    \end{enumerate}
\end{definition}

The same arguments as in \cite{BodwinP19} imply the existence of small blocking set for $H$:
\begin{lemma}[Modification of \protect{\cite[Lemma 3]{BodwinP19}}]\label{lem:blocking-set}
    There is a $2k$-blocking set $B$ of $H$ such that $|B| \leq f m$.
    Moreover, for every edge $e \in E(H)$, there are at most $f$ pairs of the form $(e,x)$ in $B$.
\end{lemma}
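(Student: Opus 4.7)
The plan is to mimic the witness-based construction from \cite{BodwinP19}, using the generalized ``damage'' relation fixed in \Cref{sect:prelim}, which is defined uniformly for all six E/V/M(C)FT settings so that the same argument will apply in every case without modification.

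First, I would extract the blocking set directly from the execution of \Cref{alg:greedy-spanner}. Each edge $e = \{u,v\}$ is added only because there is some $F \subseteq \F$ with $|F| \leq f$, not damaging $e$, such that $\dist_{H_e - F}(u,v) > (2k-1) w(e)$, where $H_e$ denotes the state of $H$ just before $e$ is processed. Fix any such witness set, call it $F_e$, and define
\[
B \coloneqq \{ (e, x) : e \in E(H),\ x \in F_e \}.
\]
The bound $|B| \leq f m$ and the per-edge bound of $f$ pairs are then immediate from $|F_e| \leq f$, and condition (i) of \Cref{def:blocking-set} holds because, by the algorithm's requirement, $F_e$ does not damage $e$.

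The main step is verifying condition (ii): for every cycle $\Sigma \subseteq H$ of length at most $2k$, some pair $(e,x) \in B$ has $e \in \Sigma$ while $x$ damages another edge of $\Sigma$. Let $e = \{u,v\}$ be the edge of $\Sigma$ added \emph{latest} by the algorithm, so that $w(e') \leq w(e)$ for every $e' \in \Sigma$ and all other edges of $\Sigma$ already lie in $H_e$. Then the complementary path $P \coloneqq \Sigma \setminus \{e\}$ is a $u$--$v$ walk of at most $2k-1$ edges in $H_e$, of total weight at most $(2k-1) w(e)$. Since $\dist_{H_e - F_e}(u,v) > (2k-1) w(e)$, the walk $P$ cannot survive intact in $H_e - F_e$; hence some $x \in F_e$ damages some edge $e' \in P$. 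This yields the pair $(e, x) \in B$ with $e \in \Sigma$ and $e' \in \Sigma$ damaged by $x$, as required.

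I do not expect any real obstacle. The only thing worth sanity-checking is that the heaviest-edge argument transfers cleanly to all colored and mixed fault models, and this is precisely what the uniform ``damage'' relation of \Cref{sect:prelim} is engineered for: nowhere does the proof need to peek inside elements of $\F$ or distinguish edge faults from vertex faults from color classes, so a single argument settles all six settings simultaneously.
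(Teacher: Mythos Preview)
Your proposal is correct and essentially identical to the paper's proof: both define $B$ from the witness sets $F_e$ and verify condition (ii) by taking the last-inserted (i.e., heaviest, with ties broken by insertion order) edge of a short cycle, observing that the remaining path contradicts the distance condition unless some $x\in F_e$ damages one of its edges. The paper's write-up is slightly terser, but the construction, the key step, and the use of the uniform ``damage'' relation are the same.
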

\begin{proof}
    For every $e=\{u,v\}\in E(H)$, let $F_e \subseteq \F$, $|F_e|\leq f$ be the set of faults not damaging $e$ that caused $e$'s insertion.
    I.e., at the time $e$ was considered by \Cref{alg:greedy-spanner}, it was the case that $\dist_{H-F_e}(u,v)>(2k-1)w(e)$.
    Define
    \[
    B=\{(e,x)\mid e\in E(H),x\in F_{e}\}.
    \]
    Every edge $e \in E(H)$ participates in at most $f$ pairs $(e,x)\in B$,
    and condition (i) of \Cref{def:blocking-set} holds.
    We show condition (ii).
    Let $\Sigma$ be a cycle on $\leq 2k$ edges in $H$.
    Consider the heaviest edge $e=\{u,v\}$ on $\Sigma$.%
    \footnote{
        Ties are broken as in \Cref{alg:greedy-spanner}. That is, $e$ is edge in $\Sigma$ that appeared last in the ordering by which \Cref{alg:greedy-spanner} went over the edges of $G$.
    }
    Then $\Sigma - e$ was already present in $H$ before $e$ was added, forming a $u$-$v$ path of length $\leq (2k-1)w(e)$.
    But, before adding $e$, we had $\dist_{H-F_e}(u,v)>(2k-1)w(e)$, so some $x\in F_e$ must damage an edge of $\Sigma-e$. Thus, $(e,x)\in B \cap \Sigma$.
\end{proof}

\begin{remark}\label{remark:blocking-set-proof}
    In fact, in the following proof of \Cref{thm:CFT-upper-bounds}, we only use that the output graph of the FT-greedy algorithm admits a small blocking set, as shown in \Cref{lem:blocking-set}.
    That is, our proof shows that if $H$ is some $n$-vertex graph that admits a $2k$-blocking set $B$ where each edge $e \in E(H)$ participates in at most $f$ pairs $(e,x) \in B$, then $|E(H)|$ can be bounded from above in the E/V/MCFT settings as stated in \Cref{thm:CFT-upper-bounds}.
\end{remark}

\begin{remark}\label{remark:blocking-set-for-subgraphs}
    Note that \Cref{lem:blocking-set} stays true also for subgraphs of $H$, by deleting the pairs from $B$ that correspond to the edges missing in the subgraph.
\end{remark}

To analyze the size of $H$, we consider the expected number of edges in a randomly-formed subgraph $\hat{H}_S$ of $H$, which we call a \emph{random blocked subgraph}.
Informally, we bound this expectation from below by $|E(H)|/\poly(f)$, and from above by a function of $n,f$ alone. Rearranging yields the final bounds.

\begin{definition}[$p$-Random Blocked Subgraph]\label{def:random-blocked-subgraph}
    Let $S \subseteq \F$ be a random subset obtained by sampling each $x \in \F$ into $S$ independently with probably $p$.
    Consider $H[S]$, the subgraph of $H$ obtained by removing all edges that are damaged by $\F-S$.
    Essentially, the \emph{$p$-random blocked subgraph} $\hat{H}_S$ is  obtained from $H[S]$ by removing the edges of surviving elements of the blocking set. Formally:
    \begin{align*}
        B_S &= \{(e,x)\in B \mid e \in E(H[S]), x \in S\}, \\
        E(B_S) &= \{ e \mid (e,x) \in B_S \text{ for some $x\in S$}\}, \\
        \hat{H}_S &= H[S] - E(B_S).
    \end{align*}
    The number of edges in $\hat{H}_S$ is denoted by $\hat{m}$.
\end{definition}

The point of this process is the following:

\begin{observation}\label{obs:girth}
    A $p$-random blocked subgraph $\hat{H}_S$ has girth $\geq 2k+1$ with probability $1$.
\end{observation}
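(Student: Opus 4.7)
My plan is to prove the statement by contradiction, leveraging the defining property of the blocking set $B$ together with the construction of $\hat{H}_S$. The overall idea is that any short cycle surviving in $\hat{H}_S$ would necessarily activate one of its own witnessing pairs in $B$, forcing one of its edges to have been removed in the formation of $\hat{H}_S$.

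Concretely, I would assume towards contradiction that $\hat{H}_S$ contains a cycle $\Sigma$ with $|\Sigma| \leq 2k$. Since $\hat{H}_S \subseteq H[S] \subseteq H$, the cycle $\Sigma$ is in particular a short cycle of $H$, so condition (ii) of \Cref{def:blocking-set} yields a pair $(e,x) \in B$ with $e \in \Sigma$ and some other edge $e' \in \Sigma$ damaged by $x$. The goal is then to show that this pair actually lies in $B_S$, which would put $e$ into $E(B_S)$ and contradict $e \in E(\hat{H}_S) = E(H[S] - E(B_S))$.

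To verify $(e,x) \in B_S$, I need to check the two membership conditions from \Cref{def:random-blocked-subgraph}. The first, $e \in E(H[S])$, is immediate since $e \in \Sigma \subseteq \hat{H}_S \subseteq H[S]$. The second, $x \in S$, is the only nontrivial point, and it follows because the damaged edge $e'$ also lies on $\Sigma$ and therefore survives in $H[S]$; by definition of $H[S]$, no fault in $\F - S$ damages $e'$, so since $x$ damages $e'$ we must have $x \in S$. This closes the argument.

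The proof contains no probabilistic content whatsoever — the claim is ``with probability $1$'' precisely because the argument holds for every realization of $S$. The only subtle step is the observation that $e'$'s survival in $H[S]$ forces $x \in S$; everything else is bookkeeping of the set definitions. I do not anticipate any genuine obstacle.
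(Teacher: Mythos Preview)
Your proposal is correct and follows essentially the same argument as the paper's proof. The paper phrases it as showing that any short cycle in $H[S]$ fails to survive in $\hat{H}_S$, while you phrase it as a contradiction starting from a short cycle in $\hat{H}_S$; the key step in both is that survival of the damaged edge $e'$ in $H[S]$ forces $x \in S$, hence $(e,x)\in B_S$ and $e$ is removed.
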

\begin{proof}
    Suppose $\Sigma$ is a cycle of $\leq 2k$ edges in $H[S]$
    (and thus also in $H$). 
    As $B$ is a blocking set, there is some $(e,x) \in B \cap \Sigma$.
    Since $\Sigma$ is sampled into $H[S]$, then $e \in E(H[S])$ and $x \in S$, showing that $(e,x) \in B_S$.
    Thus, $e$ is removed in $\hat{H}_S$, so $\Sigma$ does not survive in $\hat{H}_S$.
\end{proof}

\subsection{The ECFT Setting}

We start by analyzing the ECFT setting, in which a trivial analysis yields optimal results. It serves as a good warm-up for the other settings.
The analysis uses the folklore \emph{Moore Bounds}:

\begin{lemma}[Moore Bounds]\label{lem:Moore-bounds}
    An $n$-vertex graph with girth $\geq 2k+1$ has $O(n^{1+1/k})$ edges.
\end{lemma}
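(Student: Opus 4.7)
The plan is to prove the Moore Bound via a classical breadth-first search (BFS) argument. The key geometric observation is that in a graph of girth $\geq 2k+1$, any BFS exploration to depth $k$ from a vertex $v$ is forced to form an honest tree: any ``shortcut'' edge inside the ball would close a cycle of length $\leq 2k$, contradicting the girth bound. This forces $n$ to grow exponentially in the minimum degree, which bounds the average degree and hence $m$.

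I would proceed in two reductions followed by one counting step. First, reduce to the case of large minimum degree via a standard lemma: any $n$-vertex graph with $m$ edges contains a non-empty subgraph $G'$ of minimum degree at least $d := m/n$, obtained by iteratively deleting vertices of degree $< d$ (each such deletion keeps $(\text{edges})/(\text{vertices}) \geq d$ intact). Since $G'$ inherits the girth bound, we may assume every vertex in $G'$ has degree at least $d$. Second, fix any vertex $v$ in $G'$ and examine its BFS ball of radius $k$, arguing that (a) no two vertices at the same BFS level can be joined by an edge (otherwise two BFS-paths up to $v$ plus that edge form a closed walk of length $\leq 2k-1$, whence a cycle of length $\leq 2k-1$), and (b) no vertex at level $i$ has two neighbors at level $i-1$ (otherwise, two BFS-paths up to $v$ through those neighbors form a cycle of length $\leq 2k$). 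Consequently, the BFS ball is a tree, and counting vertices level by level gives
\[
n \;\geq\; 1 + d + d(d-1) + d(d-1)^2 + \cdots + d(d-1)^{k-1} \;\geq\; (d-1)^k.
\]
Rearranging yields $d \leq n^{1/k} + 1$, i.e., $m = nd/2 \cdot O(1) = O(n^{1+1/k})$.

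The main ``obstacle,'' such as it is, is the careful case analysis in step (b): one must rule out both intra-level edges and multiple parents by explicitly exhibiting the short cycle. A minor technical subtlety is that the two BFS-paths up to $v$ may share a common prefix, so one should truncate them at their last common ancestor before concatenating; this only shortens the resulting cycle, so the contradiction with the girth bound remains. Everything else is routine arithmetic, and the constant hidden in the $O(\cdot)$ is $1/2$ plus lower-order terms.
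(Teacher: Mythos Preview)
The paper does not actually prove \Cref{lem:Moore-bounds}; it is introduced in the text as ``the folklore \emph{Moore Bounds}'' and simply stated. Your proposal supplies a correct and standard proof of this classical fact via the minimum-degree subgraph reduction followed by the BFS-tree growth argument. The cycle-extraction in your step (b), including the truncation at the last common ancestor, is handled correctly: intra-level edges at level $i\leq k-1$ yield cycles of length $\leq 2i+1\leq 2k-1$, and double parents at level $i\leq k$ yield cycles of length $\leq 2i\leq 2k$, both contradicting girth $\geq 2k+1$. One cosmetic slip: since you set $d=m/n$, you have $m=nd$ exactly (not $nd/2$); the stray factor is harmless inside the $O(\cdot)$ but worth cleaning up.
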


Now,
let $\hat{H}_S$ be a $1/(2f)$-random blocked subgraph.
Consider an edge $e \in E(H)$ of the FT greedy spanner $H$.
Clearly, $e \in H[S]$ iff $c(e)\in S$, which happens with probability $1/(2f)$.
Next, consider a pair $(e,c) \in B$.
As $c$ does not damage $e$, meaning $c(e) \neq c$, then $(e,c)\in B_S$ iff both $c$ and $c(e)$ are sampled, which happens with probability $1/(2f)^2$.
Thus,
\begin{align*}
    \E[\hat{m}] &= \E \Big[ \big|E\big(H[S]\big) - E(B_S)\big| \Big] \\
    &\geq \E \Big[ \big|E\big(H[S]\big)\big| - |B_S| \Big] 
    = \frac{m}{2f} - \frac{|B|}{4f^2} \\
    &\geq \frac{m}{2f} - \frac{fm}{4f^2} = \frac{m}{4f} ~. && \text{(\Cref{lem:blocking-set})}
\end{align*}

On the other hand, by \Cref{obs:girth}, $\hat{H}_S$ has girth $\geq 2k+1$, and thus by the Moore bounds (\Cref{lem:Moore-bounds}), $\hat{m} = O(n^{1+1/k})$.
Hence,
\[
\frac{m}{4f} \leq \E[\hat{m}] = O(n^{1+1/k}) \quad \implies \quad m = O(fn^{1+1/k}).
\]
This concludes the proof of \Cref{thm:CFT-upper-bounds}(1).

\begin{remark}
    The proof works seamlessly for multi-graphs, and the same holds for all other proofs in this section.
\end{remark}

\subsection{The VCFT Setting}

We now analyze the size of the spanner produced by the greedy algorithm (\Cref{alg:greedy-spanner}) in the VCFT setting. As explained in \Cref{sect:technical_overview}, a na\"ive adaptation of the proof in the ECFT setting gives an $O(f^2n^{1+1/k})$ bound. 
To get our tight bound here, we go in a different route.

\subsubsection{Bipartite Moore Bounds}

Our analysis for the VCFT setting relies on Moore Bounds for bipartite graphs.
For our purposes, the strongest known bounds of Hoory~\cite{Hoory02} are not required, and the following weaker but simpler-to-prove bound is sufficient.

\begin{lemma}[Weak Bipartite Moore Bounds]\label{lem:bipartite-moore}
    Let $G$ be a bipartite graph with bipartition $V = L \cup R$, $|L|=x$, $|R| = y$.
    Suppose $G$ has girth $\geq 2k+1$. Then $|E(G)| = O(x^{1/k} y + x y^{1/k})$.
\end{lemma}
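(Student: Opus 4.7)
The plan is to carry out the standard BFS-based argument underlying the Moore bound, adapted to exploit the bipartition. Fix a vertex $u\in L$ and consider the BFS tree $T_u$ of depth $k$ rooted at $u$; by the girth condition $g(G)\geq 2k+1$, $T_u$ is genuinely a tree inside $G$ (two distinct length-$\leq k$ paths from $u$ to the same vertex would close a cycle of length $\leq 2k$). Write $a_i(u)=|N_i(u)|$ for the number of vertices at distance exactly $i$ from $u$. The bipartition forces $N_{2j}(u)\subseteq L$ and $N_{2j+1}(u)\subseteq R$, so $\sum_{j\geq 0} a_{2j}(u)\leq x$ and $\sum_{j\geq 0} a_{2j+1}(u)\leq y$.

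The heart of the argument is to turn these level counts into a lower bound in terms of $m=|E(G)|$ via Jensen's inequality. Because $T_u$ is a tree, the levels satisfy the exact recursion $a_{i+1}(u)=\sum_{v\in N_i(u)}(\deg_G(v)-1)$ with $a_1(u)=\deg_G(u)$. Summing over $u\in L$ and iteratively applying Jensen (to the convex map $d\mapsto d(d-1)$, and its iterated versions), one obtains a lower bound of roughly the form $\sum_{u\in L}a_k(u)\gtrsim x\cdot d_L\cdot\bigl((d_L-1)(d_R-1)\bigr)^{(k-1)/2}$ for odd $k$, where $d_L=m/x$ and $d_R=m/y$ are the average degrees on the two sides; an analogous product holds for even $k$. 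Combining with the upper bound $\sum_{u\in L}a_k(u)\leq x\cdot y$ (for odd $k$, since $N_k(u)\subseteq R$) and isolating $m$ delivers one of the two terms $x\cdot y^{1/k}$ or $x^{1/k}\cdot y$; performing the symmetric argument with BFS rooted in $R$ supplies the other term, yielding the stated bound.

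The main obstacle is parity bookkeeping: whether $k$ is even or odd affects which side the $k$-th BFS level lands on, and consequently the exponents of $d_L$ versus $d_R$ in the lower bound. The cleanest way around this is either to treat the two parities as separate cases and take the weaker of the resulting bounds, or to adopt a symmetric formulation that sums non-backtracking walks starting from both sides from the outset. The Jensen step also demands a small amount of care: the correct ``average degree'' to plug in at BFS level $i$ is the degree seen by a random length-$i$ non-backtracking walk endpoint (i.e., an average \emph{weighted} by how many walks reach each vertex), rather than the unweighted average over the side; this is precisely where iterated convexity does real work and is the subtle point of the derivation.
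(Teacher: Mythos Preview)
Your approach is sound in principle—it is essentially the direct non-backtracking-walk argument behind Hoory's bipartite Moore bounds—but it is a genuinely different route from the paper's. The paper gives a four-line \emph{reduction} to the ordinary (non-bipartite) Moore bound: assume $x\le y$, pick a uniformly random $S\subseteq R$ of size exactly $x$, note that $G[L\cup S]$ still has girth $>2k$ and in expectation retains an $x/y$ fraction of the edges; applying the standard Moore bound $O(x^{1+1/k})$ to this $2x$-vertex subgraph and rearranging yields $|E(G)|=O(x^{1/k}y)$ immediately. No BFS, no Jensen, no parity split.

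Your route instead rebuilds the Moore machinery inside the bipartite graph. Carried out carefully it actually yields the stronger Hoory-type bound $O\bigl((xy)^{(k+1)/(2k)}\bigr)$, which implies the stated symmetric bound by AM--GM; so your assertion that the $L$-rooted and $R$-rooted computations each separately produce ``one of the two terms $xy^{1/k}$ or $x^{1/k}y$'' is not quite how the algebra shakes out, though the final conclusion is unaffected. The real cost is exactly what you flag: the iterated convexity step (controlling the walk-weighted average degree at each BFS level, not the uniform one) is where all the work lives and is not a one-liner, whereas the paper sidesteps it entirely by black-boxing \Cref{lem:Moore-bounds}. In short, the paper's proof is considerably shorter; yours, fully executed, would be quantitatively sharper.
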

\begin{proof}
    Suppose $x \leq y$. We will prove that $|E(G)| = O(x^{1/k} y)$, which implies the lemma.
    
    Let $S \subseteq R$ be a uniformly random subset with $|S|=x$.
    The expected number of edges in the induced subgraph $G_S = G[L \cup S]$ is
    \[
    \sum_{e \in E(G)} \Pr[e \in L \times S] = |E(G)| \cdot \frac{x}{y} ~.
    \]
    Choose some $S$ exceeding the expectation, i.e., such that 
    \begin{equation}\label{eq:lower}
        |E(G_S)| \geq |E(G)| \cdot \frac{x}{y} ~.
    \end{equation}
    As $G_S$ also has girth $\geq 2k+1$, the Moore Bounds (\Cref{lem:Moore-bounds}) imply
    \begin{equation}\label{eq:upper}
        |E(G_S)| \leq O(|V(G_S)|^{1+1/k}) = O(x^{1+1/k}).
    \end{equation}
    Combining \Cref{eq:lower,eq:upper} yields $|E(G)| = O(x^{1/k} y)$, which concludes the proof.
\end{proof}

\subsubsection{Setting Up the Stage}
As before, let $H$ denote the output of the greedy spanner (\Cref{alg:greedy-spanner}), now in the VCFT setting.
Again, denote $m = |E(H)|$.
We now perform a ``clean-up'' phase for the sake of analysis, where we neglect some fraction of the edges of $H$, as follows.

First, we claim that $H$ contains at most $O(n^{1+1/k})$ \emph{monochromatic} edges.
Let $E_{mono} \coloneqq \{ \{ u,v\}\in E(H) \mid c(u)=c(v) \}$.
Note that $E_{mono}$ cannot contain cycles of $\leq 2k$ edges, as such cycles must contain some pair $(e=\{u,v\},c)$ of the blocking set $B$, but $|\{c,c(u),c(v)\}| \geq 2$. 
Hence, by the Moore Bounds (\Cref{lem:Moore-bounds}), $|E_{mono}| = O(n^{1+1/k})$.

From now on, we assume that $H$ has no monochromatic edges.
Namely, $H$ is henceforth $H-E_{mono}$.
This is justified as we can delete the $O(n^{1+1/k})$ edges of $E_{mono}$ (which is less than the upper bound we prove), while still maintaining a small blocking set (recall \Cref{remark:blocking-set-for-subgraphs}).

Next, for our analysis, we would like to ``make'' $H$ into a bipartite graph with the special property that all the vertices of any specific color class always lie on the same side.
To obtain this, we observe the following:

\begin{observation}\label{obs:bipartition}
    There is a partition $V(H) = L \cup R$, such that:
    \begin{itemize}
        \item[(i)] For every color $c$, all vertices with color $c$ are found only in one of $L,R$.
        \item[(ii)] The number of edges between $L$ and $R$ is at least $|E(H)| /2$.
    \end{itemize}
\end{observation}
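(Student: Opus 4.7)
The plan is to leverage the fact that, after the clean-up phase, every edge of $H$ is bichromatic, so edges can be naturally thought of as going between distinct color classes. From there, the observation reduces to the standard fact that every graph has a cut containing at least half of its edges.

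Concretely, I would first form an auxiliary (multi-)graph $G'$ whose vertex set is the set of colors appearing on $V(H)$, and whose edge multiset is obtained by replacing each edge $\{u,v\} \in E(H)$ by an edge $\{c(u), c(v)\}$ in $G'$. Since $H$ is assumed to have no monochromatic edges, every such edge is a proper edge of $G'$ (i.e., no self-loops), and $|E(G')| = |E(H)|$ counted with multiplicity. Next, I would invoke the well-known max-cut fact: every graph admits a bipartition of its vertices into $L', R'$ such that at least half of its edges cross the cut. This can be shown either by the probabilistic method (each color goes to $L'$ or $R'$ independently with probability $1/2$, so each edge crosses with probability $1/2$, and the expectation argument guarantees a good partition exists) or by a standard greedy local-search argument.

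Given such a partition $(L', R')$ of the colors, I would lift it back to a partition of $V(H)$ by setting $L = \bigcup_{c \in L'} V_c(H)$ and $R = \bigcup_{c \in R'} V_c(H)$. Condition (i) holds by construction, since every color class is placed entirely on one side. For condition (ii), observe that an edge $\{u,v\} \in E(H)$ crosses between $L$ and $R$ if and only if its corresponding edge in $G'$ crosses between $L'$ and $R'$; hence the number of $L$–$R$ edges in $H$ equals the number of cut edges in $G'$, which is at least $|E(G')|/2 = |E(H)|/2$.

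There is no real obstacle here: the only technical point is to be careful that the clean-up step (removing monochromatic edges) is what makes the reduction to max-cut on the color-quotient multigraph clean, since otherwise self-loops in $G'$ would never cross any partition and would cost us edges in the counting. Once monochromatic edges are out of the picture, the argument is immediate from max-cut.
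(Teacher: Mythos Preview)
Your proposal is correct and takes essentially the same approach as the paper: randomly assign each color class to $L$ or $R$ with probability $1/2$ and use the expectation argument (i.e., the standard max-cut bound). The paper's proof simply does this directly on the colors without explicitly forming the auxiliary quotient multigraph $G'$, but the content is identical.
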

\begin{proof}
    Independently for every color $c$, with probability $1/2$, all the $c$-colored vertices go to $L$, and otherwise they go to $R$.
    Each edge is split between $L,R$ with probability $1/2$ (as it is not monochromatic).
    Thus, the \emph{expected} number of $L$-to-$R$ edges is  $|E(H)|/2$, and some partition achieves $\geq |E(H)|/2$ many $L$-to-$R$ edges.
\end{proof}

Let $H'$ be the subgraph of $H$ obtained by deleting all edges that violate the bipartition.
Clearly, by (ii) of the last observation, it is enough to prove that $|E(H')| = O(f^{1-1/k} n^{1+1/k})$.
So from now on (justified again by \Cref{remark:blocking-set-for-subgraphs}), we assume that $H$ has only $L$-to-$R$ edges.

\subsubsection{Bounding the Size of $H$}

Again, denote $m = |E(H)|$.
The following is the key lemma of the analysis:
\begin{lemma}\label{lem:VCFT-upper-bound}
    Let $\hat{H}_S$ be a $p$-random blocked subgraph. Then $\E[\hat{m}] = O((pn)^{1+1/k})$.
\end{lemma}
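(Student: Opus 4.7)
The plan is to combine the guarantee that $\hat{H}_S$ has high girth (\Cref{obs:girth}) with the weak bipartite Moore bounds (\Cref{lem:bipartite-moore}), exploiting the independence granted by the color-bipartite structure that was established just before the statement.

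First I would observe that, by the assumed color-bipartiteness of $H$, every color class $V_c$ is entirely contained either in $L$ or in $R$. Split the color palette as $\C = \C_L \sqcup \C_R$ accordingly, and let $S_L = S \cap \C_L$ and $S_R = S \cap \C_R$. Since $S$ is obtained by sampling each color independently with probability $p$, the random sets $S_L$ and $S_R$ are \emph{independent}. Consequently the random vertex sets
\[
L_S \,\bydef\, \bigcup_{c \in S_L} V_c(H) \,=\, V(H[S]) \cap L, \qquad R_S \,\bydef\, \bigcup_{c \in S_R} V_c(H) \,=\, V(H[S]) \cap R
\]
are independent as well, with $\E[|L_S|] = p|L|$ and $\E[|R_S|] = p|R|$.

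Next I would apply the bipartite Moore bound to $\hat{H}_S$. By \Cref{obs:girth}, $\hat{H}_S$ has girth $\geq 2k+1$ (with probability $1$), and by assumption it is a bipartite subgraph with sides $L_S, R_S$. \Cref{lem:bipartite-moore} therefore gives
\[
\hat{m} \;=\; O\!\left(|L_S|^{1/k}\,|R_S| \;+\; |L_S|\,|R_S|^{1/k}\right).
\]
Taking expectations and using the independence of $|L_S|$ and $|R_S|$,
\[
\E[\hat{m}] \;=\; O\!\left(\E[|L_S|^{1/k}]\cdot \E[|R_S|] \;+\; \E[|L_S|]\cdot \E[|R_S|^{1/k}]\right).
\]
By Jensen's inequality applied to the concave map $x \mapsto x^{1/k}$, we have $\E[|L_S|^{1/k}] \leq (p|L|)^{1/k}$ and $\E[|R_S|^{1/k}] \leq (p|R|)^{1/k}$. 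Since $|L|, |R| \leq n$, each of the two terms is at most $p^{1+1/k}\, n^{1+1/k}$, yielding $\E[\hat{m}] = O((pn)^{1+1/k})$ as claimed.

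The main conceptual obstacle, as flagged in \Cref{sect:technical_overview}, is that a single color class in $L_S$ or $R_S$ can be arbitrarily large, so neither $|L_S|$ nor $|R_S|$ need concentrate around its mean. The color-bipartite reduction circumvents this: it isolates all of the ``bad'' correlation inside one side of the bipartition, leaving the two sides independent. This independence is exactly what allows the product $\E[|L_S|^{1/k}]\cdot \E[|R_S|]$ to split and Jensen to be applied to only one factor at a time, turning a would-be $p^2$ loss into the tight $p^{1+1/k}$.
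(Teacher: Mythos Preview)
Your proof is correct and follows essentially the same argument as the paper: define $L_S,R_S$ via sampled colors, use the color-bipartition to get independence of $|L_S|$ and $|R_S|$, apply the bipartite Moore bound (\Cref{lem:bipartite-moore}) to the high-girth graph $\hat{H}_S$, split the expectation by independence, and finish with Jensen. One tiny notational slip: since $H[S]$ is defined by edge deletion only, $V(H[S])\cap L = L$, so the equality $L_S = V(H[S])\cap L$ is not literally true; your primary definition $L_S=\bigcup_{c\in S_L}V_c(H)$ is the right one and is what you actually use.
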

\begin{proof}
    Let $L_S = \{v \in L \mid c(v) \in S\}$, i.e., the vertices from $L$ whose colors were sampled.
    Define $R_S$ similarly.
    The crux of the special color-bipartition is ensuring that $|L_S|$ and $|R_S|$ are \emph{independent} random variables.
    Recall that by \Cref{obs:girth}, $\hat{H}_S$ has girth $\geq 2k+1$.
    Using the bipartite Moore Bounds (\Cref{lem:bipartite-moore}), we now obtain
    \begin{align*}
        \E[\hat{m}]
        & \leq O\paren{ \E\sparen{ |L_S|^{1/k} \cdot |R_S| + |L_S| \cdot |R_S|^{1/k} } } \\
        &= O\paren{ \E\sparen{ |L_S|^{1/k}} \cdot \E\sparen{|R_S|} + \E\sparen{ |L_S| } \cdot \E\sparen{|R_S|^{1/k}} } && \text{(Independence)}\\
        &\leq O\paren{ \E[|L_S|]^{1/k} \cdot \E[|R_S|] + \E[|L_S|] \cdot \E[|R_S|]^{1/k} } && \text{(Jensen's inequality)} \\
        &= O\paren{ (p|L|)^{1/k} \cdot p|R| + p|L| \cdot (p|R|)^{1/k} }   \\
        &\leq O\paren{ (pn)^{1+1/k} } && 
    \end{align*}
    as required.
\end{proof}

We are now ready to complete the analysis.
Let $\hat{H}_S$ be a $p$-random blocked subgraph with $p = 1/(2f)$.
Then,
\[
\E[\hat{m}] \geq \E\sparen{ |E(H[S])|  - |B_S|}  = \frac{m}{4f^2} - \frac{|B|}{8f^3} \geq \frac{m}{8f^2} ~,
\]
where the last inequality is by \Cref{lem:blocking-set}.
Combining with \Cref{lem:VCFT-upper-bound}, we now obtain
\[
\frac{m}{8f^2} \leq \E[\hat{m}] = O\Big( \Big(\frac{n}{f} \Big)^{1+1/k} \Big) \quad \implies \quad m = O(f^{1-1/k} n^{1+1/k}).
\]
Thus, we have proved \Cref{thm:CFT-upper-bounds}(2).

\subsection{The MCFT Setting}

For the mixed setting, we suitably mix arguments from the ECFT and VCFT settings. We use the same paradigm of the VCFT proof: ``clean-up'', color-partition, and applying bipartite Moore Bounds. The only step of the paradigm that needs to be adapted is the ``clean-up'' phase of monochromatic edges, which we now prove:

\begin{lemma}
    $H$ has at most $O(fn^{1+1/k})$ edges $e = \{u,v\} \in E(H)$ with $c(u) = c(v)$.
\end{lemma}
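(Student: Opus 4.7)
The plan is to decompose $E_{mono}$ by the shared vertex-color and then handle each piece by reducing it to a simpler setting. For each color $\alpha \in \C$, let $E_{mono,\alpha} = \{\{u,v\} \in E_{mono} \mid c(u)=c(v)=\alpha\}$, and further split it into $E_\alpha^{=} = \{e \in E_{mono,\alpha} \mid c(e)=\alpha\}$ and $E_\alpha^{\neq} = E_{mono,\alpha} \setminus E_\alpha^{=}$.

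For the ``super-monochromatic'' part $E_\alpha^{=}$, every edge is damaged \emph{only} by the single color $\alpha$. Consequently no cycle $\Sigma$ of $\leq 2k$ edges can live entirely inside $E_\alpha^{=}$: any pair $(e,c) \in B \cap \Sigma$ would require $c \neq \alpha$ (since $c$ does not damage $e$) and yet $c$ damaging some other edge of $\Sigma$, which is impossible because $\alpha$ is the sole damager of every edge in $E_\alpha^{=}$. So $E_\alpha^{=}$ has girth $>2k$, and by the Moore Bounds (\Cref{lem:Moore-bounds}), $|E_\alpha^{=}| = O(n_\alpha^{1+1/k})$, where $n_\alpha = |V_\alpha(H)|$.

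The remaining part $E_\alpha^{\neq}$ is viewed as an ECFT instance on the $n_\alpha$ vertices of color $\alpha$, with edge-colors inherited from $H$ (all distinct from $\alpha$ by construction). The restriction of $B$ to these edges is a valid blocking set in this ECFT sense: a short cycle $\Sigma \subseteq E_\alpha^{\neq}$ contains a pair $(e,c) \in B \cap \Sigma$, and the ``other'' damaged edge $e'$ of $\Sigma$ has vertex-color $\alpha$ while $c \neq \alpha$, forcing $c = c(e')$ -- so $c$ acts purely as an edge-color fault within $E_\alpha^{\neq}$. Since each edge of $E_\alpha^{\neq}$ participates in at most $f$ pairs of this restricted blocking set, the ECFT analysis of \Cref{thm:CFT-upper-bounds}(1) (cf.\ \Cref{remark:blocking-set-proof}) applies verbatim and gives $|E_\alpha^{\neq}| = O(f\, n_\alpha^{1+1/k})$. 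Summing over $\alpha$ and using $n_\alpha^{1+1/k} \leq n^{1/k}\cdot n_\alpha$ together with $\sum_\alpha n_\alpha \leq n$ yields
\[
|E_{mono}| \leq \sum_\alpha \paren{|E_\alpha^{=}| + |E_\alpha^{\neq}|} = O\paren{f \sum_\alpha n_\alpha^{1+1/k}} = O\paren{f\, n^{1+1/k}}.
\]

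The main obstacle is disentangling the two potential ``damager'' colors of a monochromatic edge $e=\{u,v\}$: the common vertex-color $\alpha$ and the edge color $c(e)$. A naive attempt to apply either the ECFT or VCFT random-blocked-subgraph analysis to $E_{mono}$ directly incurs an extra $1/f$ factor, since both $\alpha$ and $c(e)$ must be sampled for $e$ to survive in $H[S]$. The two-case split sidesteps this difficulty: when $c(e)=\alpha$, the edge has only one damager, a structural rigidity that rules out short cycles outright; otherwise, fixing $\alpha$ once and for all collapses the setting to ECFT on $V_\alpha(H)$, where the already-established bound does the work.
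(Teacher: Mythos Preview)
Your proof is correct and follows essentially the same approach as the paper: reduce the monochromatic edges to an ECFT instance by observing that any short cycle among them is vertex-monochromatic, so the blocking-set color must act as an edge color. The paper's version is slightly more streamlined---it applies the ECFT bound to all of $E_{mono}$ at once (on $n$ vertices) without splitting by $\alpha$ or separating out the $c(e)=\alpha$ case, since both sub-cases are already subsumed by the same ECFT blocking-set argument.
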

\begin{proof}
    Let $E' = \{e = \{u,v\} \in E(H) \mid c(u) = c(v)\}$, and let $H' = (V(H), E')$.
    I.e., $H'$ is the subgraph with only the edges that have endpoints of the same color.
    Notice that each connected component of $H'$ contains vertices with only one color.
    
    Consider $B' = \{(e,c) \in B \mid e \in E'\}$.
    Each edge of $H'$ participates in $\leq f$ pairs from $B'$.
    We claim that $B'$ is a $2k$-blocking set for $H'$ in the \underline{\textbf{E}}CFT setting (i.e., when we ignore the existence of vertex colors).
    Condition (i) from \Cref{def:blocking-set} is clear.
    For Condition (ii), let $\Sigma$ be a cycle on $\leq 2k$ edges in $H'$.
    As $B$ is a $2k$-blocking set for $H$, there exists $(e,c) \in B$ such that the edge $e$ and the color $c$ (which does not damage $e$) both appear on $\Sigma$.
    All vertices on $\Sigma$, including the endpoints of $e$, have the same color.
    Thus, $c$ must appear on an \emph{edge} of $\Sigma$.
    Therefore, the pair $(e,c) \in B'$ certifies that Condition (ii) holds for $\Sigma$ in the \underline{\textbf{E}}CFT setting.

    Finally, as explained in \Cref{remark:blocking-set-proof}, our proof of \Cref{thm:CFT-upper-bounds}(1) showed that in the ECFT setting, an $n$-vertex graph that has a blocking set where each edge appears in at most $f$ pairs could only have $O(f n^{1+1/k})$ edges.
    Applying this to $H'$, we obtain that $|E'| = |E(H')| = O(f n^{1+1/k})$.
\end{proof}

By the above lemma (and \Cref{remark:blocking-set-for-subgraphs}), we can assume $H$ has no edges with endpoints of the same color.
For such $H$, we can apply the bipartitioning procedure as in the VCFT setting (ignoring the edge colors), and restrict the analysis to the resulting graph.
By applying \Cref{lem:VCFT-upper-bound} on the $p$-random blocked subgraph $\hat{H}_S$, we conclude that $\E[\hat{m}] = O\paren{ (pn)^{1+1/k} }$.

We set $p = 1/(2f)$.
Our final step, as before, is to give a lower bound on $\E[\hat{m}]$.
Let 
\begin{align*}
    E_2 &= \big\{e = \{u,v\} \in E(H) \ \mid \ |\{c(e),c(u),c(v)\}|= 2\big\} \\
    E_3 &= \big\{e = \{u,v\} \in E(H) \ \mid \ |\{c(e),c(u),c(v)\}|= 3\big\}
\end{align*}
As every edge has two different colors on its endpoints, $E(H) = E_2 \cup E_3$.
Next, let $B_2 = \{(e,x) \in B \mid e \in E_2\}$ and $B_3 = B-B_2$.
By \Cref{lem:blocking-set}, $|B_2| \leq f |E_2|$ and $|B_3| \leq f |E_3|$.
We now calculate:
\begin{align*}
    \E[\hat{m}] &\geq \E \sparen{ |E(H[S])| - |B_S|} \\
    &= \Big(\frac{|E_2|}{4f^2} - \frac{|B_2|}{8f^3} \Big) + \Big(\frac{|E_3|}{8f^3} - \frac{|B_3|}{16f^4} \Big) \\
    &\geq \frac{|E_2|}{8f^2} + \frac{|E_3|}{16f^3}
    \geq \frac{m}{16f^3}.
\end{align*}
Concluding, we get
\[
\frac{m}{16f^3} \leq \E[\hat{m}] = O\Big(\Big(\frac{n}{f} \Big)^{1+1/k}\Big)  \quad \implies \quad m = O(f^{2-1/k} n^{1+1/k}),
\]
which proves \Cref{thm:CFT-upper-bounds}(3).

\section{Lower Bounds}\label{sect:lower-bounds}

Our lower bounds are conditioned on the famous \emph{Erd\H{o}s Girth Conjecture} \cite{Erdos63}:
\begin{conjecture}[Erd\H{o}s Girth Conjecture]\label{conj:girth}
    For every integers $n,k \geq 1$, there exists an $n$-vertex graph with girth at least $2k+2$ and $\Omega (n^{1+1/k})$ edges.
\end{conjecture}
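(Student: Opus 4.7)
The statement is the celebrated Erd\H{o}s Girth Conjecture, a major open problem in extremal graph theory, so an unconditional proof is beyond current knowledge. My plan is therefore to sketch the constructive strategy that resolves the conjecture in the small cases where it is known, indicate the probabilistic route one would attempt in general, and point to the obstruction that prevents these ideas from closing the gap.

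The first step is the algebraic/incidence-geometric route. For $k=1$ the complete graph $K_n$ already has $\Theta(n^2)$ edges and girth $\geq 4$. For $k=2,3,5$ one would take the point--line incidence graph of a generalized polygon over $\mathbb{F}_q$: the projective plane $\mathrm{PG}(2,q)$ yields a bipartite graph with $n = 2(q^2+q+1)$ vertices, girth $6$, and $\Theta(n^{1+1/2})$ edges; the generalized quadrangles and hexagons from classical groups give analogous constructions of girth $8$ and $12$, matching the conjecture for $k=3$ and $k=5$. In each case the girth bound is immediate from the defining incidence axioms (two lines meet in at most one point, etc.), and the edge count from a straightforward parameter count.

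For general $k$, the plan is to try the probabilistic/alteration method: take $G(n,p)$ with $p = c \cdot n^{-1+1/k}$, observe that the expected number of edges is $\Theta(n^{1+1/k})$ while the expected number of cycles of length $\leq 2k+1$ is also $O(n^{1+1/k})$, and delete one edge from each short cycle. This yields a graph of girth $\geq 2k+2$ with $\Omega(n^{1+1/k})$ edges but only in the regime where the deletion loses a constant fraction, which works for lower-order girth improvements and gives the best general lower bound of roughly $\Omega(n^{1+1/(2k-1)})$, falling short of $n^{1+1/k}$. One would then try to strengthen this by using structured random models (random lifts of a base graph, or Cayley graphs of random generators), mimicking the Lubotzky--Phillips--Sarnak Ramanujan graphs, which are $(q+1)$-regular with girth $\tfrac{4}{3}\log_q n$ and hence density $n^{1+\Theta(1/k)}$ with a suboptimal constant in the exponent.

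The main obstacle, and the reason this has stood open since 1963, is that the incidence-geometric route is blocked by the Feit--Higman theorem, which classifies finite generalized $n$-gons and allows only $n \in \{3,4,6,8\}$, precisely matching the solved cases. No algebraic family of graphs of girth $2k+2$ is known to achieve density $n^{1+1/k}$ outside these values of $k$, and the probabilistic alteration method loses a polynomial factor because short cycles cluster too heavily once $p$ is pushed to the conjectured threshold. A genuine proof would seem to require either a new arithmetic or group-theoretic construction beyond what generalized polygons and Ramanujan graphs supply, or a substantially sharper probabilistic deletion argument. For the purposes of this paper, the conjecture is used only as a black box in deriving the lower bounds of \Cref{thm:main}, so it suffices to state it and proceed.
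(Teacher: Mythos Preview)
Your assessment is correct: the paper does not prove this statement. It is stated as \Cref{conj:girth} precisely because it is an open conjecture, and the paper uses it only as a hypothesis for the lower bounds in \Cref{sect:lower-bounds}. So there is no ``paper's own proof'' to compare against, and your decision to treat it as a black-box assumption matches exactly what the paper does.

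One small correction in your background discussion: for $k=1$ the complete graph $K_n$ has girth $3$, not girth $\geq 4$; the conjecture requires girth $\geq 2k+2 = 4$, so you want a triangle-free graph such as $K_{\lfloor n/2\rfloor,\lceil n/2\rceil}$, which indeed has $\Theta(n^2)$ edges and girth $4$.
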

The Girth Conjecture is widely believed and commonly used as a basis for lower bounds on spanners, and particularly for the lower bounds of \cite{BodwinDPW18} on V/EFT-spanners. 

We start by proving a lower bound in the edge-colored case for \emph{simple} graphs (see \Cref{sect:multi-graphs} for a discussion on multi-graphs).

\begin{theorem}\label{thm:edge-color-spanner-lower-bound}
    Let $n, k, f \geq 1$ be integers.
    Assuming the Girth Conjecture, 
    there exists a \emph{simple} graph $G$ with $n$ vertices and colored edges,
    such that every $f$-ECFT $(2k-1)$-spanner of $G$ must have $\Omega(\min\{f n^{1+1/k}, n^2\})$ edges.
\end{theorem}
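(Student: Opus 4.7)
My plan is to pack \(f\) randomly permuted copies of an Erd\H{o}s extremal girth graph on the same vertex set, give each copy its own color, and exploit the fact that failing all but one color isolates a single copy, which then forces the spanner to keep every edge of that copy. Concretely, using the Girth Conjecture I fix an \(n\)-vertex graph \(G_0\) with girth \(\geq 2k+2\) and \(m = \Theta(n^{1+1/k})\) edges, draw \(f\) independent uniformly random permutations \(\pi_1,\dots,\pi_f\) of \(V(G_0)\), and set \(G_i = \pi_i(G_0)\). The target graph \(G\) is the simple union \(\bigcup_i G_i\), where each edge \(e\) receives the color \(c(e) = \min\{i : e \in G_i\}\). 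The color class \(\tilde G_i = \{e : c(e) = i\}\) is then a subgraph of \(\pi_i(G_0)\), hence inherits girth \(\geq 2k+2\).

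\textbf{Forcing the spanner to be full.} Given any \(f\)-ECFT \((2k-1)\)-spanner \(H\) of \(G\), for each color \(i\) I take the fault set \(F = \{1,\dots,f\}\setminus\{i\}\) (size \(f-1 \leq f\)), so that \(G - F = \tilde G_i\). Since \(H-F\) must be a \((2k-1)\)-spanner of \(\tilde G_i\), and deleting any edge \(\{u,v\}\) from a graph of girth \(\geq 2k+2\) blows the \(uv\)-distance up to \(\geq 2k+1 > 2k-1\), we must have \(\tilde G_i \subseteq H - F \subseteq H\). As the \(\tilde G_i\)'s partition \(E(G)\) by definition of \(c\), this yields \(|E(H)| \geq \sum_i |\tilde G_i| = |E(G)|\). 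So the whole argument reduces to lower-bounding \(|E(G)|\).

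\textbf{Counting edges and the cap.} For a fixed vertex pair \(\{u,v\}\) and indices \(i\neq j\), the probability that both \(G_i\) and \(G_j\) contain \(\{u,v\}\) is \((m/\binom{n}{2})^2\); summing over pairs of vertices and pairs of copies, the expected number of ``collisions'' (pairs covered by two distinct \(G_i\)'s) is \(\binom{f}{2}\binom{n}{2}(m/\binom{n}{2})^2 = O(f^2 m^2 / n^2)\), and this bounds the edges ``lost'' in passing to the simple union. When \(f \leq c\, n^{1-1/k}\) for a small absolute constant \(c\), the collision expectation is at most \(fm/4\), and a Markov argument yields an outcome with \(|E(G)| \geq fm - (\text{collisions}) \geq fm/2 = \Omega(fn^{1+1/k})\). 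For larger \(f\) I rerun the construction with \(f' = \lfloor c\, n^{1-1/k}\rfloor\) active colors, padding the remaining color slots with empty classes; the resulting graph already has \(\Omega(n^2)\) edges, and since any \(f\)-ECFT spanner is also \(f'\)-ECFT, the bound transfers to \(\Omega(\min\{fn^{1+1/k}, n^2\})\). The only subtle step I expect is the collision control: a naive packing could destroy the girth of an individual color class, but restricting to subgraphs of permuted copies of \(G_0\) makes high girth automatic, and the collision budget is exactly what is captured by the \(\min\) in the statement.
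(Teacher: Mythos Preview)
Your proof is correct and follows essentially the same approach as the paper: both pack $f$ randomly permuted copies of an Erd\H{o}s girth-conjecture graph on a common vertex set, assign each copy its own color, and use the high girth of each color class to force any $f$-ECFT $(2k-1)$-spanner to retain every edge. The only cosmetic difference is that the paper adds the permuted copies one at a time by induction (choosing at each step a permutation for which at least half the new edges avoid collisions), whereas you throw down all $f$ permutations at once and control the total collision count via a single Markov argument; both executions of the probabilistic packing are standard and yield the same bound.
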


\begin{proof}
    Without loss of generality, we assume $f \leq \frac{1}{8} n^{1-1/k}$.
    If $f$ is larger, use the graph constructed for $\frac{1}{8} n^{1-1/k}$ color faults, which already gives the desired bound.

    Let $G_{ER}$ be the graph from the Girth Conjecture with $n$ vertices, i.e., it has girth $\geq 2k+2$ and $\alpha n^{1+1/k}$ edges, where $\alpha>0$ is a fixed constant.
    Without loss of generality, we may assume $\alpha \leq 1$ (otherwise, delete edges from $G_{ER}$).
    We prove that for every $1\leq i \leq f$, there exists a \emph{simple} $n$-vertex edge-colored graph $G_i$, with color palette $[i]$, such that the following conditions hold:
    \begin{itemize}
        \item[(i)] The number of edges in $G_i$ satisfies
        $
        i \cdot \frac{\alpha}{2} \cdot n^{1+1/k} \leq |E(G_i)| \leq i \cdot \alpha \cdot n^{1+1/k} .
        $ 
        \item[(ii)] For every color $c \in [i]$, there is no cycle of $\leq 2k+1$ $c$-colored edges in $G_i$.
    \end{itemize}
    The desired graph for the theorem is $G_f$, as it has $\Omega(fn^{1+1/k})$ edges by (i), and no proper subgraph which is an $f$ color fault-tolerant $(2k-1)$-spanner by (ii).
    
    The proof is by induction on $i$.
    The base case is just $G_1 = G_{ER}$ with all edges having the same color $1$.
    The induction step constructs $G_{i+1}$ from $G_i$, as follows.
    Consider a uniformly random bijection (or permutation)
    $\pi : V(G_{ER}) \to V(G_i)$.
    For every edge $\{u,v\} \in E(G_{ER})$,
    \[
    \Pr \big[ \{\pi(u),\pi(v)\} \in E(G_i) \big] = 
    \frac{|E(G_i)|}{\binom{n}{2}}
    \leq
    \frac{i \cdot \alpha n^{1+1/k}}{\frac{1}{4} n^2}
    \leq
    \frac{\frac{1}{8} n^{1-1/k} \cdot \alpha n^{1+1/k}}{\frac{1}{4} n^2}
    \leq
    \frac{1}{2}.
    \]
    Thus, in expectation, at least half of the edges in $G_{ER}$ are mapped by $\pi$ to \emph{non-edges} in $G_i$.
    Choose $\pi$ such that this last property holds. Define $G_{i+1}$ by adding these edges to $G_i$, and color them with the new color $i+1$. 
    The resulting graph $G_{i+1}$ is indeed simple, and
    properties (i) and (ii) follow immediately by the construction and the induction hypothesis.
\end{proof}

For the sake of completeness, we note that the lower bound of Bodwin et al. \cite{BodwinDPW18} for VFT spanners also applies to the more general VCFT setting.

\begin{theorem}[\cite{BodwinDPW18}]\label{thm:vertex-color-spanner-lower-bound}
    Let $n, k, f \geq 1$ be integers, $f \leq n$.
    Assuming the Girth Conjecture, there exists vertex-colored simple $n$-vertex graph $G$, such that every $f$-V(C)FT $(2k-1)$-spanner of $G$ must have $\Omega(f^{1-1/k} n^{1+1/k})$ edges.
\end{theorem}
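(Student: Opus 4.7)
The plan is to exploit the fact that VCFT strictly generalizes VFT: assigning each vertex of an uncolored graph its own unique color puts sets of $\leq f$ color faults in bijection with sets of $\leq f$ vertex faults, so the classes of $f$-VCFT $(2k-1)$-spanners and $f$-VFT $(2k-1)$-spanners then coincide on such an injectively-colored graph. Hence every VFT lower bound transfers verbatim, and it suffices to recall the construction of \cite{BodwinDPW18} and color its vertices injectively.

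Concretely, I would recall the construction briefly. Let $G_{ER}$ be an Erd\H{o}s graph on $n' = \lfloor n/f \rfloor$ vertices with girth $> 2k$ and $\Omega(n'^{1+1/k})$ edges (given by the Girth Conjecture). Form $G$ by blowing up each vertex $v \in V(G_{ER})$ into $f$ independent copies $v^1, \ldots, v^f$ and replacing each edge $\{u,v\} \in E(G_{ER})$ with the complete bipartite graph $K_{f,f}$ between the two copy-sets. Then $G$ has at most $n$ vertices and $|E(G)| = f^2 \cdot |E(G_{ER})| = \Omega\paren{f^2 (n/f)^{1+1/k}} = \Omega(f^{1-1/k} n^{1+1/k})$.

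To see that $G$ admits no proper $f$-VFT $(2k-1)$-spanner, suppose $H \subsetneq G$ misses an edge $\{u^i, v^j\}$ and consider the fault set $F = \{u^\ell : \ell \neq i\}$ of size $f-1 < f$, which leaves $u^i$ as the only surviving $u$-copy. In $G - F$ the distance is still $1$ via $\{u^i, v^j\}$. In $H - F$, any $u^i$-to-$v^j$ path projects to a walk $u = z_0, z_1, \ldots, z_L = v$ in $G_{ER}$ whose final edge cannot be $\{u, v\}$ (since the only surviving $u$-copy is $u^i$, and the edge $\{u^i, v^j\}$ is absent from $H$). A short case analysis on whether the projected walk revisits $u$ or $v$, combined with the girth-$(2k+1)$ condition on $G_{ER}$, then forces $L \geq 2k$, contradicting the $(2k-1)$-stretch guarantee. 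Finally, coloring each vertex of $G$ with its own unique color yields the vertex-colored graph of the theorem.

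The proof is essentially clerical: once the reduction from VCFT to VFT via injective coloring is recognized, everything reduces to the cited VFT bound. The only mild subtlety is the walk-length case analysis for the projection into $G_{ER}$, which is entirely standard for Erd\H{o}s-graph-based spanner lower bounds and can be disposed of in a few lines.
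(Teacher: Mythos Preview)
Your reduction from VCFT to VFT via injective coloring is exactly right, and it is precisely what the paper does: it states the theorem, attributes it to \cite{BodwinDPW18}, and simply remarks that the VFT lower bound applies verbatim to VCFT. The paper does not re-derive the \cite{BodwinDPW18} construction at all, so your sketch goes beyond what the paper provides.

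However, your sketch of the \cite{BodwinDPW18} argument has a real gap. With the $f$-fold blow-up and the one-sided fault set $F = \{u^\ell : \ell \neq i\}$, the claim that every $u^i$--$v^j$ path in $H-F$ has length $\geq 2k$ is false. Concretely, let $w \neq u$ be any neighbor of $v$ in $G_{ER}$ and pick $j' \neq j$; then
\[
u^i \to v^{j'} \to w^\ell \to v^j
\]
is a path of length $3$ in $H-F$ (none of its edges is the missing $\{u^i, v^j\}$, and none of its internal vertices is a $u$-copy). Its projection $u,v,w,v$ revisits $v$, so closing it up with the edge $\{u,v\}$ gives a closed walk in which every edge has even multiplicity, and the girth condition yields nothing. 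In fact, for $f \geq 3$ and $k \geq 2$ one can check that $H = G - \{u^i, v^j\}$ \emph{is} an $f$-VFT $(2k-1)$-spanner of your $G$, so the ``no proper spanner'' claim fails outright for the $f$-fold blow-up.

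The fix is the standard one: fail the extra copies of \emph{both} endpoints, $F = \{u^\ell : \ell \neq i\} \cup \{v^\ell : \ell \neq j\}$. Then $u$ and $v$ each appear exactly once in the projected walk, so closing with $\{u,v\}$ gives a closed walk in which $\{u,v\}$ has multiplicity one, forcing a genuine cycle of length $\geq 2k+1$ and hence $L \geq 2k$. Since $|F| = 2(q-1)$ for a $q$-fold blow-up, you must take $q \approx f/2$ rather than $q = f$; the edge count is still $\Omega(f^{1-1/k} n^{1+1/k})$ after this constant-factor adjustment.
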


Finally, we give the lower bound in the MCFT setting:

\begin{theorem}\label{thm:mixed-color-spanner-lower-bound}
    Let $n, k, f \geq 1$ be integers.
    Assuming the Girth Conjecture, 
    there exists a \emph{simple} graph $G$ with $n$ vertices, where both edges and vertices are colored, such that every $f$-MCFT $(2k-1)$-spanner of $G$ must have $\Omega(\min\{f^{2-1/k} n^{1+1/k}, n^2\})$ edges.
\end{theorem}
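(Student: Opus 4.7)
The plan is to encode $\Theta(f^2)$ edge-color-disjoint copies of the ECFT lower-bound graph inside a single simple graph, so that each copy can be \emph{isolated} from the others by failing only $O(f)$ vertex colors. Specifically, I will set $f' = \lfloor (f+3)/2 \rfloor = \Theta(f)$, partition the $n$ vertices into $f'$ equally-sized groups $V_1, \dots, V_{f'}$, and let each $V_i$ form a distinct vertex-color class $\gamma_i$. For each unordered pair $\{i,j\}$ with $i \neq j$, I will place on the bipartite vertex set $(V_i, V_j)$ a bipartite variant of the ECFT lower-bound graph from \Cref{thm:edge-color-spanner-lower-bound} with parameters $(2n/f',\, k,\, f')$, using a fresh palette $C_{\{i,j\}}$ of $f'$ edge colors; the palettes will be pairwise disjoint across different pairs. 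Each such instance contributes $\Omega(f' (n/f')^{1+1/k})$ edges, lying only between $V_i$ and $V_j$. Since distinct instances occupy disjoint bipartite edge slots and use disjoint edge colors, the union is a simple graph whose total edge count is
\[
\binom{f'}{2}\cdot \Omega\big(f'(n/f')^{1+1/k}\big) \;=\; \Omega\big((f')^{2-1/k}\,n^{1+1/k}\big) \;=\; \Omega\big(f^{2-1/k}\,n^{1+1/k}\big).
\]

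To argue that every such edge must lie in any $f$-MCFT $(2k-1)$-spanner, I will fix $e=\{u,v\}$ belonging to the $\{i,j\}$-instance with color $c(e)\in C_{\{i,j\}}$ and exhibit $\leq f$ MCFT faults after which $e$ is the only short $u$-to-$v$ path. The fault set will be: (a) the $f'-2$ vertex colors $\gamma_k$ for $k\notin\{i,j\}$, which delete every vertex outside $V_i\cup V_j$ and thus every edge of every other instance (since each such instance has an endpoint in a removed group); and (b) the $f'-1$ edge colors in $C_{\{i,j\}}\setminus\{c(e)\}$, leaving only the $c(e)$-monochromatic subgraph of the $\{i,j\}$-instance. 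By property (ii) in the proof of \Cref{thm:edge-color-spanner-lower-bound}, this monochromatic subgraph has girth $\geq 2k+2$, so any $u$-to-$v$ path other than $e$ has length $\geq 2k+1 > 2k-1$. Hence $e$ must be kept by the spanner. The total number of faults used is $(f'-2)+(f'-1)=2f'-3\leq f$, as needed.

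The $\min\{\cdot,n^2\}$ in the statement absorbs the regime where $f^{2-1/k}n^{1+1/k}$ would exceed the trivial $n^2$ edge bound on simple graphs; there the random-bijection packing inside each instance saturates, and the construction is simply replaced by the complete bipartite graphs between the pairs of groups.

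The main technical obstacle will be producing the per-pair ECFT subgraphs as \emph{bipartite} simple graphs on $(V_i, V_j)$ with property (ii). This will require a bipartite adaptation of \Cref{thm:edge-color-spanner-lower-bound}: rerun its inductive random-bijection construction with a bipartite Erd\H{o}s graph (whose existence follows from the bipartite form of the Girth Conjecture) as the base case, and restrict to bipartition-preserving bijections. The probability calculation governing the induction is essentially unchanged, since the bipartite edge-slot count $|V_i|\cdot|V_j|=\Theta((n/f')^2)$ remains within a constant factor of the slot count used in the unipartite version of the argument.
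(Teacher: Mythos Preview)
Your proposal is correct and follows the same high-level scheme as the paper: pack $\Theta(f^2)$ ECFT lower-bound instances so that each can be isolated by $O(f)$ vertex-color faults. The implementations differ in a few instructive ways.

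The paper starts from a single bipartite ECFT graph $G_{EC}$ with sides $L,R$ (obtained simply by discarding half the edges of the graph from \Cref{thm:edge-color-spanner-lower-bound}), takes $f$ copies of each side with \emph{two} vertex-color palettes $\{\ell_1,\dots,\ell_f\}$ and $\{r_1,\dots,r_f\}$, and places one copy of $G_{EC}$ on $(L\times\{i\})\cup(R\times\{j\})$ for every $(i,j)\in[f]^2$. All copies share the \emph{same} edge-color palette. To argue an edge is necessary, the paper fails $2(f-1)$ vertex colors to isolate one copy and then invokes the black-box $f$-ECFT property of $G_{EC}$ (so $3f$ faults total, absorbed by rescaling $f$). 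By contrast, you use a single vertex-color palette of size $f'$, place instances on unordered pairs $\{i,j\}$, give each instance its own disjoint edge-color palette, and argue necessity by reducing all the way to a monochromatic girth-$(2k{+}2)$ subgraph via $2f'-3\le f$ faults.

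The paper's route is slightly cleaner in that it treats $G_{EC}$ as a black box: bipartiteness is obtained post hoc and no balanced bipartition is needed, since the left and right copy indices are independent. Your route, as you correctly flag, requires a balanced bipartite version of the ECFT construction on each $(V_i,V_j)$; this is indeed obtainable (e.g., start the induction of \Cref{thm:edge-color-spanner-lower-bound} from the bipartite double cover of the girth-conjecture graph, and restrict to side-preserving bijections), but it is extra work the paper avoids. On the other hand, your argument uses fewer total colors in the fault set and does not require the final rescaling $f\gets 3f$.
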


\begin{proof}
    Let $G_{EC}$ be the edge-colored $n$-vertex graph of \Cref{thm:edge-color-spanner-lower-bound}, 
    with $\Omega(\min\{fn^{1+1/k},n^2\})$ edges and no proper $f$-ECFT $(2k-1)$-spanner.
    Denote its color palette by $\C_E$.
    We may also assume that $G_{EC}$ is bipartite, as every graph can be made bipartite by deleting at most half of its edges.%
    \footnote{
        This can be seen by considering a random bipartition.
    }
    Let $V(G_{EC}) = L \cup R$ be the partition to left and right sides.

    We construct a graph $G$ with both edge and vertex colors, as follows.
    The vertex set is
    \[
    V(G) = (L \times [f]) \cup (R \times [f]), 
    \]
    i.e., there are $f$ copies of each vertex from $V(G) = L \cup R$.
    There are two sets of new colors for the vertices (disjoint from each other and from $\C_E$):
    \[
    \C_L = \{\ell_1, \dots, \ell_f\} \quad \text{and} \quad \C_R = \{r_1, \dots, r_f\} .
    \]
    A vertex $(u,i) \in L \times [f]$ gets the color $\ell_i$.
    Similarly, a vertex $(v,j) \in R \times [f]$ gets the color $r_j$.
    We then place a copy of $G_{EC}$ on $(L \times \{i\}) \cup (R \times \{j\})$ for every $i,j \in [f]$. Formally, the
     edge set is defined by
    \[
    E(G) = \big\{ \{(u,i), (v,j)\} \mid i,j \in [f], ~ u \in L,~ v\in R,~ \{u,v\} \in E(G_{EC}) \big\}.
    \]
    Every edge $\{(u,i), (v,j)\} \in E(G)$ is assigned the color that is given in $G_{EC}$ to the corresponding edge $\{u,v\}$ (from the palette $\C_E$).
    Note that the copies are edge-disjoint, so the resulting $G$ is simple.

    We now show that $G$ has no $3f$-MCFT $(2k-1)$-spanner but itself.
    Indeed, suppose $H$ is such a spanner.
    Choose any $i,j \in [f]$. Let $F_L^i = \C_L - \{\ell_i\}$ and $F_R^j = \C_R - \{r_j\}$.
    Consider the graph $H - (F_L^i \cup F_R^j)$.
    As $|F_L^i \cup F_R^j| \leq 2f$, this graph must be an $f$-MCFT (and, as $(\C_L \cup \C_R) \cap \C_E = \emptyset$, also $f$-ECFT) $(2k-1)$-spanner of $G-(F_L^i \cup F_R^j)$.
    However, $G-(F_L^i \cup F_R^j)$ is just the copy of $G_{EC}$ on $(L \times \{i\}) \cup (R \times \{j\})$, which has no such spanner but itself.
    Therefore, each of the copies of $G_{EC}$ must be contained in $H$, hence $H = G$ as claimed.

    Finally, we have that $|V(G)| = f |V(G_{EC})| = fn$ and
    \[
    |E(G)| = f^2 |E(G_{EC})| = f^2 \cdot \Omega(\min\{fn^{1+1/k}, n^2\}) = \Omega(\min\{f^{2-1/k} (fn)^{1+1/k}, (fn)^2)\}).
    \]
    The result follows by rescaling the proof: $n \gets fn$ and $f \gets 3f$.
\end{proof}

\section{Efficient Constructions}\label{sect:efficient-constructions}

We now show an efficient (polynomial time) construction yielding E/V/MCFT spanners whose sizes match the bounds of \Cref{sect:upper-bounds} up to  a $\poly(k)$ factor:

\begin{theorem}\label{thm:modified-greedy}
    There is a polynomial time algorithm for constructing $f$-CFT $(2k-1)$-spanners, yielding spanner with size $s$ such that:
    \begin{enumerate}
        \item $s = O(k f n^{1+1/k})$ in the ECFT setting,
        \item $s = O(k f^{1-1/k} n^{1+1/k})$ in the VCFT setting, and
        \item $s = O(k^2 f^{2-1/k} n^{1+1/k})$ in the MCFT setting.
    \end{enumerate}
    The running time is $O(mf \cdot s)$.
\end{theorem}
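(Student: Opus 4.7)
The strategy is to adapt the polynomial-time greedy framework of Dinitz and Robelle~\cite{DinitzR20} to the color-fault setting. The key observation is that by \Cref{remark:blocking-set-proof}, the size analysis of \Cref{sect:upper-bounds} depends only on the existence of a $2k$-blocking set in which each edge participates in $O(f)$ pairs. The plan is therefore to produce, in polynomial time, a spanner admitting a $2k$-blocking set where each edge participates in $O(kf)$ pairs---an $O(k)$-factor worse than \Cref{lem:blocking-set}, but still strong enough to drive the analysis. Re-running the random blocked subgraph argument of \Cref{sect:upper-bounds} with sampling rate $p = \Theta(1/(kf))$ in place of $\Theta(1/f)$ shows that this weaker blocking set propagates a multiplicative $O(k)$ overhead into the ECFT and VCFT bounds; in the MCFT bound, one such overhead enters through the monochromatic-edge cleanup (which internally invokes the ECFT bound) and another through the bipartite Moore step, compounding to $O(k^2)$.

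The algorithm processes edges $e = \{u,v\}$ in non-decreasing weight order. For each $e$, it runs the following polynomial-time substitute for the exponential fault check: initialize $T \subseteq \F$ to $\emptyset$, and iterate up to $f+1$ times. In each round, compute a shortest $u$-$v$ path $P$ in the current spanner after deleting all edges damaged by faults in $T$ or by any fault that also damages $e$. If this distance exceeds $(2k-1)\, w(e)$, add $e$ to $H$ and record $T$ together with the sequence of paths $P_1, \dots, P_j$ as the \emph{witness} for $e$. Otherwise, select some fault $x \notin T$ that damages an edge of $P$ and does not damage $e$, and add $x$ to $T$. If all $f+1$ rounds complete without triggering insertion, skip $e$. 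Each edge incurs $O(f)$ shortest-path computations inside the current $H$, matching the claimed $O(mf \cdot s)$ running time.

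Correctness of the spanner property follows from the Dinitz--Robelle pigeonhole argument adapted to colors: if $e$ is skipped, then any color-fault set $F$ of size $\leq f$ leaves at least one of the $f+1$ discovered short paths intact in $H - F$. For the blocking-set extraction, for each edge $e$ that is added with witness $(T, P_1, \dots, P_j)$, every $P_i$ together with $e$ forms a cycle on $\leq 2k$ edges in $H$, and we include in $B$ pairs $(e, x_i)$ paired with a specific edge of $P_i$ damaged by $x_i$. Since $|P_i| \leq 2k-1$ and $|T| \leq f+1$, each edge ends up in $O(kf)$ pairs. The main obstacle I expect is the adaptation of both the pigeonhole and the extraction to colors: unlike a single-edge or single-vertex fault, a color fault may wipe out several edges of $P_i$ at once, which threatens the naive per-round charging. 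The resolution is to choose $x$ in each round carefully (e.g.\ always a color already appearing on $P$ but absent from previously processed paths) and to charge each blocking-set pair to a specific companion edge of $P_i$, mirroring the charging scheme of \cite{DinitzR20} but verifying that the color structure does not cause distinct rounds to collide. Once this is settled, \Cref{remark:blocking-set-proof} combined with the recomputed analysis of \Cref{sect:upper-bounds} at $p = \Theta(1/(kf))$ yields the three bounds of \Cref{thm:modified-greedy}.
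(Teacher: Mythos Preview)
Your high-level plan is exactly the paper's: adapt the Dinitz--Robelle subroutine so that it produces a $2k$-blocking set with $O(kf)$ pairs per edge, then rerun the \Cref{sect:upper-bounds} analysis with $p = \Theta(1/(kf))$. Your account of how the $O(k)$ and $O(k^2)$ overheads arise in the three settings is also correct.

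However, the specific subroutine you describe has a genuine gap. You add \emph{one} fault $x_i$ to $T$ per round. This does \emph{not} make the paths $P_1,\dots,P_{f+1}$ color-disjoint: the constraint on $P_{i+1}$ is only that it avoids edges damaged by $x_1,\dots,x_i$, so a single color $y \notin \{x_1,\dots,x_f\}$ (and not damaging $e$) may well damage an edge of \emph{every} $P_i$. Then $F=\{y\}$ kills all $f{+}1$ paths, and the pigeonhole argument for correctness collapses. Your proposed ``resolution'' of choosing $x$ carefully cannot fix this while keeping one fault per round; this is the place where colors genuinely differ from individual vertex faults. Relatedly, your blocking-set accounting is inconsistent: with one $x_i$ per round you would get $\leq f$ pairs $(e,x_i)$ per inserted edge, not $O(kf)$, so the appearance of $|P_i|\leq 2k-1$ in that count has no role.

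The fix, which is what the paper does, is to add to the blame set \emph{all} colors appearing on $P_i$ that do not damage $e$, call this set $\C_i$, and to search for $P_{i+1}$ in $H - \bigcup_{j\le i}\C_j$. Then the $\C_i$ are pairwise disjoint by construction, so any $F$ with $|F|\le f$ misses some $\C_i$ entirely, and $P_i$ survives; this restores correctness. Since each path has at most $2k-1$ edges and $2k$ vertices, $|\C_i| \le 4k$, giving $|\tilde F_e| \le 4kf$; the blocking set is simply $\tilde B = \{(e,c): c\in \tilde F_e\}$, with at most $4kf$ pairs per edge. From here your plan (set $p=1/(8kf)$ and rerun \Cref{sect:upper-bounds}) goes through verbatim.
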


This is by modifying the ``Modified FT Greedy Spanner'' algorithm of Dinitz and Robelle \cite{DinitzR20} to the colored settings.
As in \cite{DinitzR20}, we start by considering the case where $G$ is \emph{unweighted}.
Assume this from now on; we discuss how weights are handled at the end of this section.

\paragraph{The High-Level Idea.}
Consider the beginning of some iteration of the FT greedy algorithm (\Cref{alg:greedy-spanner}), which examines $e = \{u,v\} \in E(G)$.
The intractable part of the algorithm is trying to find a fault-set $F_{e}$ of size $\leq f$ such that $\dist_{H -F_{e}} (u,v) > 2k-1$.
If no such set is found, $e$ can clearly be discarded.
In the complementary case, $e$ is added to $H$, and the set $F_e$ serves to account for $\leq f$ pairs in the $2k$-blocking set constructed by \Cref{lem:blocking-set}, which ultimately yields the size bound on the final $H$.

Dinitz and Robelle's idea is replacing this by tractable \emph{approximate} subroutine, that settles for finding a set $\tilde{F}_e$ with size $O(kf)$, and ensures that $e$ can be discarded if such a set is not found.
This subroutine is based on \emph{another greedy approach}, of computing a maximal set of `disjoint' (in some sense) $u$-$v$ paths of length $\leq 2k-1$ in $H$. The ``blame set'' $\tilde{F}_e$ is found if the maximal set of paths
contains at most $f$ paths, in which case $e$ is inserted to the spanner.
If there are more than $f$ `disjoint' paths, $e$ can be safely discarded.
Dinitz and Robelle~\cite{DinitzR20} consider the VFT setting, where `disjoint' means internally vertex-disjoint, and the set $\tilde{F}_e$ contains the internal vertices of those paths.
This approach can easily be adapted to the colored setting, as we now explain.

\paragraph{Subroutine for Deciding on an Edge.}
The subroutine 
works in $f+1$ iterations.
We preserve the invariant that if we completed iteration $i$ without halting, then we have $i$ $u$-$v$ paths $\{P_j\}_{j=1}^{i}$ in $H$, each of length $\leq 2k-1$.
Furthermore, let $\C_j$ be the set of colors appearing on $P_j$ that do not damage $e$.
Then the color sets $\{\C_j\}_{j=1}^{i}$ are mutually disjoint.
To execute iteration $i$, we compute the shortest $u$-$v$ path $P_i$ in $H - \bigcup_{j=1}^{i-1} \C_j$.
If $P_i$ is of length $\geq 2k$, we halt by deciding to include $e$, and `blame' the set $\tilde{F}_{e} = \bigcup_{j=1}^{i-1} \C_j$ for this decision.
Crucially, we can bound $|\tilde{F}_{e}| \leq 4kf$ (as a path of length $\leq 2k-1$ can have at most $(2k-1) + 2k$ vertices and edges, and thus at most $4k$ different colors).
Otherwise, if $P_i$ has length $\leq 2k-1$, we add $P_i$ and continue to the next iteration.
If the last iteration $f+1$ was completed, 
we have found $f+1$ $u$-$v$ paths of length $\leq 2k-1$, of which every pair may share only the colors that damage $e$.
This means that no set of at most $\leq f$ color faults (that does not damage $e$ itself) can cause the distance between $u,v$ in $H$ to be $> 2k-1$.
In this case, we call $e$ \emph{replaceable}, and we safely discard it while ensuring the correctness of the final output spanner.

\medskip
The pseudo-code for the modified FT greedy algorithm using this subroutine appears as \Cref{alg:modified-greedy-spanner-unweighted}.
The running time analysis is identical to \cite{DinitzR20}.
\begin{algorithm}[H]
\caption{$\mathsf{ModifiedCFTGreedySpanner}(G,k,f)$}\label{alg:modified-greedy-spanner-unweighted}
(Assuming $G$ is unweighted)
\begin{algorithmic}[1]
    \State $H \gets (V(G), \emptyset)$
    \For{each $e = \{u,v\} \in E(G)$ in arbitrary order}
        \If{$e$ is not replaceable because of color set $\tilde{F}_e$}
            \State $H \gets H \cup \{e\}$
        \EndIf
    \EndFor
    \State \Return $H$
\end{algorithmic}
\end{algorithm}

\paragraph{Size Analysis.}
The size analysis requires only minor modifications compared to the one for the (exponential-time) FT greedy algorithm, presented in \Cref{sect:upper-bounds}.
We define
\[
\tilde{B} = \{(e,c) \mid e\in E(H), c \in \tilde{F}_e\}.
\]
Since right before an edge $e = \{u,v\}$ is inserted into $H$ we have that $\dist_{H - \tilde{F}_e} (u,v) > 2k-1$, then by the same proof as in \Cref{lem:blocking-set},  $\tilde{B}$ is a $2k$-blocking set.
We can now repeat the analysis via random blocked subgraphs, with the following modifications:
\begin{itemize}
    \item[(1)] The definition of a $p$-random blocked subgraph (\Cref{def:random-blocked-subgraph}) uses $\tilde{B}$ instead of $B$.
    \item[(2)] We set $p = \frac{1}{8kf}$ instead of $\frac{1}{2f}$, to account for the fact that $|\tilde{B}| \leq 4kf m$, instead of $|B| \leq fm$.
\end{itemize}

The resulting bounds on the size of the spanner are then bigger than the ones in \Cref{thm:CFT-upper-bounds} by a factor of $O(k)$ in the ECFT and the VCFT settings, and by a factor of $O(k^2)$ in the MCFT setting, proving \Cref{thm:modified-greedy} for the unweighted case.

\paragraph{Handling Weights.}
The way \cite{{DinitzR20}} handle a weighted graph $G$ works `as is' also in our case.
It is surprisingly simple: 
Run the \emph{unweighted} \Cref{alg:modified-greedy-spanner-unweighted} on $G$, (i.e., ignoring the weights), but with the ``correct'' increasing order of weight.
The correctness is by the following observation: 
If $e = \{u,v\} \in G$ was found replaceable, then at that time, for every set $F$ of at most $\leq f$ faults (not damaging $e$), $H-F$ contains a $u$-$v$ path of at most $2k-1$ edges.
All these edges precede $e$ in the ordering, so the total weight of the path is $\leq (2k-1)w(e)$.

\section{Conclusion}

In this work, we introduced the concept of color fault-tolerant (CFT) spanners, which generalize vertex/edge fault-tolerant spanners.
We give tight bounds on the size required by CFT spanners, and polynomial time constructions of such spanners with nearly optimal size guarantees.
We suggest a few possible directions for future research:

\begin{itemize}
    \item Can we provide faster algorithms for CFT spanners with size guarantees close to those of the FT greedy algorithm?
    See e.g.\ the construction by Parter \cite{Parter22} in the VFT setting, based on the Baswana-Sen algorithm~\cite{BaswanaS07}.

    \item As noted, the min-cut max-flow theorem does not hold in the colored setting. Hence, one may consider investigating a different notion from the ``color cuts'' of CFT-connectivity certificates, which pertains to ``colorful flow''. That is, a subgraph $H$ of $G$ such that every two vertices $u,v$ have $\lambda$ color-disjoint paths in $G$ iff they have $\lambda$ such paths in $H$.

\end{itemize}

\ifDRAFT
    % no acknowledgments
\else
\paragraph{Acknowledgments.}
We are grateful to Merav Parter for encouraging this collaboration, and for helpful guidance and discussions.
We thank Nathan Wallheimer for useful discussions.
\fi

% % \cleardoublepage
% \phantomsection
% \addcontentsline{toc}{section}{References}
\bibliographystyle{alphaurl}
\bibliography{references.bib}

\appendix
\section{FT Spanners for Multi-Graphs}\label{sect:multi-graphs}

All of the results in this paper apply also when $G$ is a colored \emph{multi}-graph, i.e., when many parallel edges between two vertices are allowed.
Especially in edge-colored settings, this seems like a rather natural assumption in light of the real-world motivation of communication networks.
However, it is presumably less natural in non-colored settings, so explicit discussions on multi-graphs have mostly been neglected in previous work on FT spanners.
We next provide such a short discussion, which shows an interesting phenomenon: the bounds for uncolored multi-graphs are the same as for colored simple graphs.

\paragraph{EFT: $\Theta(fn^{1+1/k})$ size bounds.}
Our upper bound for ECFT-spanners in \Cref{thm:CFT-upper-bounds}(1) applies for multi-graphs, and EFT is just a special case.
Additionally, we note that the EFT-spanner algorithm  of Chechik et al.\ \cite{ChechikLPR10} also works on multi-graphs, and achieves this size bound.
For the lower bound, one may take the Girth Conjecture graph with $\Omega(n^{1+1/k})$ edges, and give each of its edges multiplicity $f$.
Clearly, such a graph cannot have a proper $f$-EFT $(2k-1)$-spanner.

\paragraph{VFT: $\Theta(f^{1-1/k}n^{1+1/k})$ size bounds.}
In the VFT setting, parallel edges can be discarded by keeping the edge of lowest weight among them. This does not change distances (regardless of which vertex faults occur).
Hence, the bounds are the same as for simple graphs.

\paragraph{MFT: $\Theta(f^{2-1/k}n^{1+1/k})$ size bounds.}
Constructing a lower bound instance is easy: Take the VFT lower bound instance which is a simple graph with $\Omega(f^{1-1/k}n^{1+1/k})$ edges and replace each edge by $f$ parallel edges.
Our upper bound for MCFT spanners applies also here.

\section{Color Lists}\label{sect:color-lists}
This section is devoted to proving \Cref{thm:color-lists}.
We start with a formal description of the color lists setting.
Here, each edge/vertex $x \in E(G) \cup V(G)$ is associated with a list $\ell(x) \subseteq \C$, where $\C$ is a finite color palette.
The failure of any color $c \in \ell(x)$ causes $x$ to crash.
Formally, for $c \in \C$, we now denote $E_c(G) = \{e \in E(G) \mid c \in \ell(e)\}$ and $V_c (G) = \{v \in V(G) \mid c \in \ell(v)\}$.
The fault universe is $\F = \{E_c (G) \cup V_c (G)\}_{c \in \C}$.
Again, the elements in $\F$ are in 1-to-1 correspondence to the colors in $\C$ (that is, we have faults of colors).
We say $s \in \F$ damages $e = \{u,v\} \in E(G)$ if $s \cap (\ell(e) \cup \ell(u) \cup \ell(v)) \neq \emptyset$.
As $s$ is identified with some color $c \in \C$, this means that the color $c$ damages $e$ if it appears in $\ell(e) \cup \ell(u) \cup \ell(v)$.
For $S\subseteq \F$ (or $S \subseteq \C$), the notations $G-S$ and $G[S]$ are defined as in \Cref{sect:prelim}.
In particular, note that a vertex $v \in V(G)$ such that $\ell(v) \not \subseteq S$ becomes isolated in $G[S]$.
We say that $G$ is \emph{$(\mu,\nu)$-list-colored} if $|\ell(e)| \leq \mu$ for all $e \in E(G)$, and $|\ell(v)| \leq \nu$ for all $v \in V(G)$.
Namely, edge (resp., vertex) lists are of size at most $\mu$ (resp., $\nu$).
Letting $(\mu,\nu) = (1,0)$, $(0,1)$ or $(1,1)$ gives the ECFT, VCFT or MCFT settings, respectively.
Throughout, we will consider a $(\mu,\nu)$-list-colored graph $G$, where $\mu,\nu$ are arbitrary fixed integer constants.

\subsection{Upper Bound}

As in \Cref{sect:upper-bounds}, the upper bound on the size of an $f$-CFT $(2k-1)$-spanner is obtained by analyzing the output spanner $H$ of the FT-greedy algorithm (\Cref{alg:greedy-spanner}).
Observe that \Cref{lem:blocking-set} still holds: $H$ has a blocking set where each edge $e \in E(H)$ appears in at most $f$ pairs $(e,x)\in B$.
Again, this property alone suffices for proving the size upper bound.
Namely, we have:

\begin{theorem}\label{thm:color-lists-upper-bound}
    Let $H$ be a $(\mu,\nu)$-list-colored graph with $n$ vertices.
    Suppose there exists a $2k$-blocking set $B$ for $H$, such that for every $e \in E(H)$, there are at most $f$ pairs of the form $(e,c) \in B$.
    Then $|E(H)| = O(f^{\mu + \nu(1-1/k)} n^{1+1/k})$.
\end{theorem}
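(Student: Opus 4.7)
The plan is to generalize the arguments of Section~\ref{sect:upper-bounds} to color lists by strong induction on the pair $(\mu,\nu)$.  The base case $\nu=0$ follows the ECFT proof of Section~3.2: with a $p$-random blocked subgraph at $p=1/(2f)$, each edge $e$ contributes $p^{|\ell(e)|}\ge p^\mu$ to $\E[|E(H[S])|]$ and each $(e,c)\in B$ contributes $p^{|\ell(e)|+1}$ to $\E[|B_S|]$ (using $c\notin\ell(e)$ from condition (i)), so $\E[\hat m]\ge m p^\mu/2$; since no vertex is faulty when $\nu=0$, the Moore bound (Lemma~\ref{lem:Moore-bounds}) yields $\E[\hat m]=O(n^{1+1/k})$ and hence $m=O(f^\mu n^{1+1/k})$, which matches the target.

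For $\nu\ge 1$, the analysis decomposes into a cleanup phase and a bipartition phase, generalizing Sections~3.3 and~3.4.  Call an edge $e=\{u,v\}$ \emph{degenerate} if some two of $\ell(e),\ell(u),\ell(v)$ intersect, or if some list is of sub-maximal size.  For such an edge the shared (or missing) color can be dropped from one of its containing lists without changing the damage set $D(e)=\ell(e)\cup\ell(u)\cup\ell(v)$, since the removed color is still contributed by the other list.  Thus the subgraph of degenerate edges, equipped with the shrunken list, is a valid $(\mu-1,\nu)$- or $(\mu,\nu-1)$-list-colored instance carrying the restriction of $B$, and by induction its size is at most $O(f^{(\mu-1)+\nu(1-1/k)}n^{1+1/k})$ or $O(f^{\mu+(\nu-1)(1-1/k)}n^{1+1/k})$, both absorbed into the target.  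After cleanup we may pad any vertex list of size $<\nu$ with fresh unused colors to force $|\ell(v)|=\nu$ uniformly; the padding preserves the blocking set trivially since the fresh colors appear in no pair of $B$ and damage no edge outside their single host.

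The bipartition phase then runs verbatim as in Sections~3.3--3.4.  Randomly partition $\C=C_L\sqcup C_R$ and set $L=\{v:\ell(v)\subseteq C_L\}$, $R=\{v:\ell(v)\subseteq C_R\}$; pairwise-disjoint size-$\nu$ lists make each edge a pure $L$-to-$R$ edge with probability $2^{1-2\nu}$, so some bipartition captures a constant fraction of $m$ for constant $\nu$.  Using a $p=1/(2f)$ random blocked subgraph, the blocking-set bookkeeping gives $\E[\hat m_{LR}]\ge |E_{LR}|/\bigl(2(2f)^{\mu+2\nu}\bigr)$, while independence of $|L_S|$ and $|R_S|$ (from $C_L\cap C_R=\emptyset$), Jensen's inequality on $x\mapsto x^{1/k}$, and the bipartite Moore bound (Lemma~\ref{lem:bipartite-moore}) yield $\E[\hat m_{LR}]=O((p^\nu n)^{1+1/k})$; here the identity $\E[|L_S|]=|L|p^\nu$ uses exactly that $|\ell(u)|=\nu$ after padding.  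Rearranging gives $|E_{LR}|=O(f^{\mu+\nu(1-1/k)}n^{1+1/k})$ and hence the target bound on $m$.

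The main obstacle I anticipate is the cleanup of vertex-vertex overlaps.  Unlike an edge-edge overlap, which is local to a single edge, shrinking $\ell(v)$ alters the damage sets of all other edges incident to $v$ and can invalidate condition (ii) of the restricted blocking set on short cycles through $v$ that are not entirely contained in the degenerate subgraph.  The plan is to make the shrinking a coordinated per-vertex choice (fixing a single ``removable'' color $c^*(v)$ per vertex with degenerate incidences, rather than one per edge) and to verify that the damaging color of any blocking pair witnessing a cycle \emph{inside} the degenerate subgraph cannot coincide with any $c^*(\cdot)$---this should follow from condition (i) of the pair applied at the relevant vertex.  As a sanity check, the symmetric operation of padding (lengthening a list with a fresh color) is provably harmless, which strongly suggests that the corresponding coordinated shortening can likewise be carried out without real loss of structure.
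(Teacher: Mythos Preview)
Your overall architecture (induction on $\nu$, base case via the plain Moore bound, bipartition phase via Lemma~\ref{lem:bipartite-moore} with $\E[|L_S|]=p^\nu|L|$) is exactly right and matches the paper.  The gap is precisely where you flagged it: the cleanup of vertex--vertex conflicts.

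Your proposed fix, a per-vertex removable color $c^*(v)$, does not go through.  Suppose $(e,c')\in B$ witnesses a short cycle $\Sigma$ in the degenerate subgraph, with $c'$ damaging some other edge $e''=\{w,w'\}\in\Sigma$ solely because $c'\in\ell(w)$.  To keep condition~(ii) after shrinking you need $c'\neq c^*(w)$.  But condition~(i) only says $c'\notin \ell(e)\cup\ell(u)\cup\ell(v)$ for the endpoints $u,v$ of $e$; it says nothing about $\ell(w)$ when $w\notin\{u,v\}$.  So ``condition (i) applied at the relevant vertex'' does not yield the conclusion you want, and there is no evident way to choose the $c^*(\cdot)$ consistently across all short cycles.

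The paper sidesteps this by decomposing \emph{by color} rather than \emph{by vertex}.  Only vertex--vertex conflicts need cleanup; it pads all lists to exact sizes $\mu,\nu$ up front, and handles edge--vertex overlaps later by stratifying on $i=|\ell(e)\cup\ell(u)\cup\ell(v)|\in[2\nu,2\nu+\mu]$ in the lower bound on $\E[\hat m]$.  For the conflicting edges $E_{\conf}=\{\{u,v\}:\ell(u)\cap\ell(v)\neq\emptyset\}$, observe $E_{\conf}\subseteq\bigcup_{c}E(H_c)$ where $H_c$ is the subgraph induced on $V_c(H)=\{v:c\in\ell(v)\}$.  In $H_c$ the color $c$ damages \emph{every} edge, so by condition~(i) no blocking pair restricted to $H_c$ uses $c$ as its fault.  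Hence deleting $c$ uniformly from every vertex list of $H_c$ gives a $(\mu,\nu-1)$-instance on which the restricted blocking set is still valid (condition~(ii) survives because $c'\neq c$ is now guaranteed globally, not just at the endpoints of $e$).  Induction yields $|E(H_c)|=O\!\big(f^{\mu+(\nu-1)(1-1/k)}n_c^{1+1/k}\big)$, and summing with $\sum_c n_c=\nu n=O(n)$ and $n_c\le n$ gives $|E_{\conf}|=O\!\big(f^{\mu+(\nu-1)(1-1/k)}n^{1+1/k}\big)$, absorbed into the target.  Replacing your per-vertex shrinking by this per-color covering argument makes the rest of your plan go through verbatim.
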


Without loss of generality, we assume that the edge (vertex) color-lists in $H$ are all of size \emph{exactly $\mu$} (\emph{exactly $\nu$}).
Otherwise, ``pad" the lists with new colors, which does not hurt the blocking set condition.
Throughout, we denote $m = |E(H)|$.

\medskip
The proof of \Cref{thm:color-lists} is by induction on $\nu$, the size of the vertex lists.
As in \Cref{sect:upper-bounds}, we analyze the expection of $\hat{m}$ number of edges in a $p$-random blocked subgraph $\hat{H}_S$ (as defined in \Cref{def:random-blocked-subgraph}), with $p = 1/(2f)$.
Note that \Cref{obs:girth} stays true, i.e., a $p$-random blocked subgraph always has girth $> 2k$.

\paragraph{Base Case: $\nu = 0$.}
By the Moore Bounds (\Cref{lem:Moore-bounds}), $\E [\hat{m}] = O(n^{1+1/k})$.
On the other hand, note that an edge $e \in E(H)$ has probability $(2f)^{-\mu}$ to survive in $H[S]$. Thus,
\begin{align*}
    \E[\hat{m}] 
    \geq \E \Big[ \big|E\big(H[S]\big)\big| - |B_S| \Big] 
    = \frac{m}{(2f)^\mu} - \frac{|B|}{(2f)^{\mu + 1}}
    \geq \frac{m}{2^{\mu + 1} f^\mu}.
\end{align*}
Combining, we get $m \leq 2^{\mu + 1} f^\mu \cdot O(n^{1+1/k}) = O(f^\mu n^{1+1/k})$, as needed.\footnote{Recall that $\mu,\nu$ are constants.}

\paragraph{Induction Step.}
For $\nu > 0$, we follow the strategy of the MCFT setting from \Cref{sect:upper-bounds}.
We start with the ``clean-up" phase.

\begin{lemma}
    Let $E_{\conf} = \{ e = \{u,v\} \in E(H) \mid \ell(u) \cap \ell(v) \neq \emptyset\}$ be the set of \emph{conflicting} edges, i.e., edges whose endpoints share a color in their lists.
    Then
    $|E_{\conf}| = O(f^{\mu+(\nu-1)(1-1/k)} \cdot n^{1+1/k})$.
\end{lemma}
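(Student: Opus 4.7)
My plan is to decompose $E_{\conf}$ according to a chosen ``witness'' shared color of each conflicting edge, and then apply the inductive hypothesis (\Cref{thm:color-lists-upper-bound} for vertex list size $\nu-1$) to each piece. For each $e = \{u,v\} \in E_{\conf}$ I would fix an arbitrary $c_e \in \ell(u)\cap \ell(v)$, and let $H^c$ be the subgraph with edge set $E^c := \{e \in E_{\conf} : c_e = c\}$, whose vertex set $V^c$ consists of the endpoints of these edges. By construction every vertex of $V^c$ has $c$ in its list, which yields the crucial double-counting bound
\[
\sum_{c} |V^c| \;\leq\; \sum_{v \in V(H)} |\ell(v)| \;\leq\; \nu n.
\]
I would then regard each $H^c$ as a $(\mu, \nu-1)$-list-colored graph by deleting $c$ from every vertex's list while keeping edge lists intact.

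The core step is to turn the blocking set $B$ of $H$ into a valid blocking set for $H^c$ under the reduced coloring. I would take
\[
B^c \;:=\; \{(e,c') \in B \;:\; e \in E^c,~ c' \neq c\},
\]
which clearly has at most $f$ pairs per edge. Condition (i) of \Cref{def:blocking-set} is inherited automatically, since shrinking vertex lists can only shrink damage sets. For condition (ii), given a short cycle $\Sigma$ in $H^c$, I would invoke the blocking-set property of $B$ in $H$ to obtain some $(e,c') \in B$ with $e \in \Sigma$ and $c'$ damaging some other edge of $\Sigma$ under $\ell$. The crux is that $c' \neq c$: otherwise, since $c_e = c$ implies $c \in \ell(u) \cup \ell(v)$ for $e=\{u,v\}$, the color $c'=c$ would damage $e$, contradicting condition (i) of $B$. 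Because the only damage relations altered by removing $c$ from vertex lists concern $c$ itself, a color $c' \neq c$ still damages the same edge of $\Sigma$ under the reduced coloring, so $(e,c') \in B^c$ properly blocks $\Sigma$.

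With this blocking set in place, the inductive hypothesis gives $|E^c| = O(f^{\mu + (\nu-1)(1-1/k)} \cdot |V^c|^{1+1/k})$ for every $c$. Summing over $c$: since $x \mapsto x^{1+1/k}$ is convex, and we have $|V^c| \leq n$ together with $\sum_c |V^c| \leq \nu n$, the sum $\sum_c |V^c|^{1+1/k}$ is maximized (up to the constant $\nu$) by concentrating mass on $\nu$ values of size $n$, giving $\sum_c |V^c|^{1+1/k} = O(n^{1+1/k})$ (treating $\nu$ as a fixed constant, per the theorem's assumption). Combining yields the claimed $|E_{\conf}| = O(f^{\mu + (\nu-1)(1-1/k)} \, n^{1+1/k})$.

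The main obstacle I anticipate is verifying carefully that $B^c$ remains a valid $2k$-blocking set under the reduced coloring; everything else is essentially bookkeeping and a standard convexity argument. The verification rests on the clean observation that every vertex of $H^c$ carries $c$ in its list, so any blocking color for a cycle in $H^c$ is forced to differ from $c$, which is exactly what makes stripping $c$ from the vertex lists harmless.
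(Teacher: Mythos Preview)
Your proposal is correct and follows essentially the same approach as the paper. The only cosmetic difference is that you partition $E_{\conf}$ by choosing one witness color $c_e$ per edge (yielding disjoint $E^c$), whereas the paper covers $E_{\conf}$ by the induced subgraphs $H_c$ on $V_c(H)$ (allowing overlaps); both then strip $c$ from the vertex lists, observe that $c$ cannot occur as a blocking color on any edge whose endpoints both carry $c$, apply the inductive bound for $(\mu,\nu-1)$, and sum via $\sum_c n_c \le \nu n$.
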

\begin{proof}
    For $c \in \C$, let $H_c$ be the subgraph of $H$ induced on $V_c (H)$, i.e., on those vertices with color $c$ in their lists.
    Let $n_c = |V_c (H)|$.
    Let $B_c$ the $2k$-blocking set of $H_c$ induced from $B$, i.e., $B_c = \{(e,x) \in B \mid e \in E(H_c)\}$.
    As $c$ damages every edge in $E(H_c)$, it cannot appear in $B_c$.
    Thus, we may omit $c$ from the vertex lists of $H_c$, and $B_c$ remains a $2k$-blocking set for this $(\mu, \nu-1)$-list-colored graph.
    So, by induction, $|E(H_c)| = O(f^{\mu+(\nu-1)(1-1/k)} n_c^{1+1/k})$.
    As $E_{\conf} \subseteq \bigcup_{c} E(H_c)$, we get
    $
    |E_{\conf}| \leq \sum_{c} |E(H_c)| = O\paren{  f^{\mu+(\nu-1)(1-1/k)} \cdot \sum_{c} n_c^{1+1/k}}.
    $
    Note that
    $
    \sum_{c} n_c^{1+1/k} \leq n^{1/k} \cdot \sum_{c} n_c = n^{1/k} \cdot \nu n = O(n^{1+1/k}),
    $
    so the lemma follows.
\end{proof}
By the above lemma, we may assume, without loss of generality, that $H$ has no conflicting edges (i.e., $H$ is henceforth replaced with $H-E_{\conf}$).
The next step is the color-bipartition.

\begin{observation}    
    There is a subgraph $H'$ of $H$ with the following properties:
    \begin{itemize}
        \item[(i)] $H'$ is bipartite with partition $V(H') = L \cup R$.
        \item[(ii)] For each color $c$, all vertices of $H'$ that have $c$ in their lists appear on the same side of $H'$, namely, either in $L$ or in $R$.
        \item[(iii)] $|E(H')| \geq 2^{-2\nu + 1} \cdot m$.
    \end{itemize}
\end{observation}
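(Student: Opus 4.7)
The plan is to mimic the random-partition argument of Observation 3.4 (the VCFT case $\nu=1$), suitably generalized to lists of size $\nu$. The key guess, motivated by the target constant $2^{-2\nu+1}$, is the following random process: independently for every color $c \in \C$, assign $c$ to side $L$ or side $R$ with probability $1/2$ each. A vertex $v$ is then placed on side $L$ if \emph{all} colors of $\ell(v)$ were sent to $L$, placed on side $R$ if all colors of $\ell(v)$ were sent to $R$, and removed otherwise (i.e., whenever the colors of its list are split between the two sides). Finally, define $H'$ to be the bipartite graph on the placed vertices keeping only the edges that go between $L$ and $R$.

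First I would check that (i) and (ii) hold by construction: $H'$ is bipartite by definition, and any surviving vertex $v$ with $c \in \ell(v)$ must sit on whichever side $c$ was assigned to, so vertices sharing a color end up on the same side. The main step is the expected-size computation. Using $|\ell(u)|=|\ell(v)|=\nu$ (which holds WLOG by the list-padding remark at the top of the section) together with $\ell(u) \cap \ell(v) = \emptyset$ (ensured by the preceding clean-up of conflicting edges), I would argue that the color-assignments of $\ell(u)$ and of $\ell(v)$ are independent, and each single list is monochromatic with probability $2^{-\nu}$ on each side. Hence for any fixed edge $e=\{u,v\}$,
\[
\Pr[e \text{ survives in } H'] = \Pr[u\in L,\, v \in R] + \Pr[u \in R,\, v \in L] = 2 \cdot 2^{-\nu} \cdot 2^{-\nu} = 2^{-2\nu+1}.
\]
Linearity of expectation gives $\E[|E(H')|] \geq 2^{-2\nu+1} \cdot m$, and a probabilistic-method argument picks a realization satisfying (iii).

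The step I expect to require the most care is precisely the \emph{use} of the clean-up phase: if $\ell(u)$ and $\ell(v)$ overlapped in a color $c$, the events determining the sides of $u$ and $v$ would be correlated through $c$, and any edge sharing a color could never survive (both endpoints would always be placed on the same side, if placed at all). It is exactly the conflicting-edges lemma proved just before this observation that lets us assume $\ell(u) \cap \ell(v) = \emptyset$ for every $\{u,v\} \in E(H)$, which in turn legitimizes the independence underlying the survival-probability computation. Beyond that subtlety, the argument is a clean generalization of the $\nu=1$ case and I do not anticipate further obstacles.
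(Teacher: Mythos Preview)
Your proposal is correct and follows exactly the same approach as the paper: randomly assign each color to $L$ or $R$, keep a vertex only if its entire list agrees on one side, and use the disjointness $\ell(u)\cap\ell(v)=\emptyset$ from the clean-up (together with the padding to $|\ell(\cdot)|=\nu$) to compute the survival probability $2\cdot 2^{-\nu}\cdot 2^{-\nu}=2^{-2\nu+1}$ for each edge. The paper's proof is essentially a terser version of what you wrote.
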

\begin{proof}
    Independently for each $c \in C$, with probability $1/2$, choose the side $L$ for $c$, otherwise choose $R$.
    A vertex $v \in V(H)$ is deleted if the colors in $\ell(v)$ did not all agree on the same side.
    Otherwise, it goes to the agreed side in $H'$.
    The probability of each edge $e \in E(H)$ to have one endpoint in $L$ and the other in $R$, and thus survive in $H'$, is $2^{-2\nu+1}$ (as it is not conflicting).
    Thus, $\E[ |E(H')| ] = 2^{-2\nu+1} \cdot m$, which implies the claim.
\end{proof}

As $2^{-2\nu+1} = \Theta(1)$, we may replace $H$ with $H'$ from now on.
I.e., henceforth we assume, without loss of generality, that $H$ has a bipartition $V(H) = L \cup R$ where for each color $c$ it holds that either $V_c (H) \subseteq L$ or $V_c (H) \subseteq R$ (i.e., the vertices with $c$ in their lists are all in one side).
We next use the bipartite Moore bounds:

\begin{lemma}\label{lem:color-lists-upper-bound}
    Let $\hat{H}_S$ be a $p$-random blocked subgraph of $H$ with $\hat{m}$ edges.
    Then $\E[\hat{m}] = O((p^\nu n)^{1+1/k})$.
\end{lemma}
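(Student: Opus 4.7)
The plan is to follow the same paradigm as the VCFT analysis (Lemma 3.7), adapted to the general $(\mu,\nu)$-list setting. The crucial change is that a vertex $v \in V(H)$ now survives in $H[S]$ iff \emph{all} $\nu$ colors in its list $\ell(v)$ are sampled into $S$, an event of probability $p^\nu$. This will let us mimic the VCFT calculation with $p$ replaced by $p^\nu$ at the vertex level, producing the claimed bound.

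First I would define $L_S = \{v \in L \mid \ell(v) \subseteq S\}$ and analogously $R_S$. The key claim is that $|L_S|$ and $|R_S|$ are \emph{independent} random variables. This uses the color-bipartition property of $H$ crucially: every color $c$ that appears in any list has $V_c(H)$ entirely on one side of the bipartition, so the set of colors appearing in the lists of left-vertices is disjoint from the set appearing in the lists of right-vertices. Thus $|L_S|$ is a function of the sampling outcomes of the ``left colors'' only, while $|R_S|$ is a function of the ``right colors'' only, and these two sets of Bernoulli trials are independent. By linearity, $\E[|L_S|] = p^\nu |L|$ and $\E[|R_S|] = p^\nu |R|$.

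Next, any vertex $v$ with $\ell(v) \not\subseteq S$ is isolated in $H[S]$, since every edge incident to it is damaged by some color in $\ell(v) \setminus S$. Hence every edge of $\hat{H}_S$ lies between $L_S$ and $R_S$. Combined with Observation 3.3 ($\hat{H}_S$ has girth $>2k$), the bipartite Moore bounds (Lemma 3.6) give
\[
\hat{m} \;\leq\; O\!\paren{ |L_S|^{1/k}\,|R_S| \,+\, |L_S|\,|R_S|^{1/k} }.
\]
Taking expectations, splitting each product using the independence of $|L_S|$ and $|R_S|$, applying Jensen's inequality to the concave map $x \mapsto x^{1/k}$, and substituting the expectations computed above yields $\E[\hat{m}] = O((p^\nu n)^{1+1/k})$.

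The only nontrivial step is the independence claim in the second paragraph, which is precisely what the color-bipartition clean-up was designed to enable; everything else is a mechanical generalization of the VCFT calculation, with the factor $p$ upgraded to $p^\nu$ wherever a vertex-survival event appears. I expect no further obstacles once the bipartition is in place.
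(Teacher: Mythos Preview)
Your proposal is correct and follows essentially the same approach as the paper: the paper's proof is a one-line reference back to Lemma~\ref{lem:VCFT-upper-bound}, replacing $L_S$ by $\{v \in L \mid \ell(v) \subseteq S\}$ so that $\E[|L_S|] = p^\nu |L|$, which is exactly what you do. You have simply spelled out the details (independence via the color-bipartition, isolation of vertices with $\ell(v)\not\subseteq S$, bipartite Moore bounds, Jensen) that the paper leaves implicit.
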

\begin{proof}
    The proof is identical to \Cref{lem:VCFT-upper-bound}, but with $L_S = \{v \in L \mid \ell(v) \subseteq S\}$, so $\E[|L_S|] = p^\nu |L|$, and similarly for $R_S$.
\end{proof}

We are now ready to conclude the induction step.
Let $\hat{H}_S$ be a $p$-random blocked subgraph with $\hat{m}$ edges, where $p = 1/(2f)$.
For $i\in [2\nu, 2\nu + \mu]$, let
\begin{align*}
    E_i &= \{ e = \{u,v\} \in E(H) \mid |\ell(e) \cup \ell(u) \cup \ell(v)| = i\}, \\
    B_i &= \{ (e,c) \in B \mid e \in E_i\}.
\end{align*}
Note that $E(H) = \bigcup_i E_i$ (as there are no conflicting edges), $B = \bigcup_i B_i$, and $|B_i| \leq f |E_i|$.
We have:
\[
\E[\hat{m}]
\geq \E \sparen{ |E(H[S])| - |B_S|}
= \sum_{i=2\nu}^{2\nu + \mu} \Big( \frac{|E_i|}{2^i f^i} - \frac{|B_i|}{2^{i+1} f^{i+1}} \big)
\geq \sum_{i=2\nu}^{2\nu + \mu} \frac{|E_i|}{2^{i+1} f^i}
= \Omega\Big( \frac{m}{f^{2\nu + \mu}} \Big).
\]
Combining with \Cref{lem:color-lists-upper-bound}, and rearranging yields $m = O(f^{\mu + \nu(1-1/k)} n^{1+1/k})$, which concludes the induction step and proves \Cref{thm:color-lists}.

\subsection{Lower Bound}

The lower bound for the color lists setting is obtained by combining the proofs of \Cref{thm:edge-color-spanner-lower-bound} and \Cref{thm:mixed-color-spanner-lower-bound}.

\begin{theorem}\label{thm:list-color-spanner-lower-bound}
    Let $\mu, \nu$ be constant integers.
    Assuming the Girth Conjecture, 
    there exists a \emph{simple} $(\mu,\nu)$-list colored graph $G$ with $n$ vertices,
    such that every $f$-CFT $(2k-1)$-spanner of $G$ must have $\Omega(\min\{f^{\mu + \nu(1-1/k)} n^{1+1/k}, n^2\})$ edges.
\end{theorem}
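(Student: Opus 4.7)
The plan is to iterate the constructions of Theorem~\ref{thm:edge-color-spanner-lower-bound} and Theorem~\ref{thm:mixed-color-spanner-lower-bound} along $\mu$ edge-color axes and $\nu$ vertex-color axes, respectively. First, I would generalize the edge-color construction to lists of length $\mu$: set up $\mu$ disjoint edge-color palettes $C_1^E, \ldots, C_\mu^E$, each of size $f$, and for every tuple $\vec{i} = (i_1, \ldots, i_\mu) \in [f]^\mu$ try to embed a fresh copy of the Erd\H{o}s graph $G_{ER}$ into the same vertex set, giving every edge of this copy the common list $\ell(e) = \{c_{1,i_1}, \ldots, c_{\mu,i_\mu}\}$. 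The embedding is done by iteratively applying uniformly random bijections of $V(G_{ER})$ into $V$, in exactly the way Theorem~\ref{thm:edge-color-spanner-lower-bound} does for $\mu=1$: as long as the number of already placed edges is at most $\frac{1}{8}n^{2-1/k}$, a random bijection places at least half of the new edges on previously unused vertex-pairs, so we can insert them as genuinely new (simple) edges. This lets us pack $M = \Theta(\min\{f^\mu, n^{1-1/k}\})$ disjoint Erd\H{o}s copies and yields an edge-colored simple graph $G^{(\mu)}$ with $\Omega(\min\{f^\mu n^{1+1/k},\, n^2\})$ edges.

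To argue that $G^{(\mu)}$ has no proper $O(\mu f)$-CFT $(2k-1)$-spanner: for any missing edge $e$, lying in copy $\vec{i}$, fault the $\mu(f-1)$ edge colors $\bigcup_{s=1}^{\mu}\bigl(C_s^E\setminus\{c_{s,i_s}\}\bigr)$. This set does not hit $\ell(e)$, while for every other tuple $\vec{j}\neq \vec{i}$ there is some coordinate $s$ with $j_s\neq i_s$, so $c_{s,j_s}$ is faulted and all edges of copy $\vec{j}$ are destroyed. Hence $G^{(\mu)}-F$ is precisely copy $\vec{i}$, an Erd\H{o}s graph of girth $>2k$, so the missing edge $e$ breaks the $(2k-1)$-spanner property.

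Second, I would bolt on vertex-color lists of length $\nu$ by mimicking Theorem~\ref{thm:mixed-color-spanner-lower-bound}. First, make $G^{(\mu)}$ bipartite by deleting at most half of its edges, preserving both the edge count up to a constant and the per-color girth (which is inherited by every subgraph); let the bipartition be $V(G^{(\mu)}) = L\cup R$. Introduce $2\nu$ disjoint vertex-color palettes $C_1^L,\ldots, C_\nu^L$ and $C_1^R,\ldots,C_\nu^R$, each of size $f$. Now define
\[
V(G) = \bigl(L\times [f]^\nu\bigr)\cup \bigl(R\times [f]^\nu\bigr),
\]
with $(u,\vec{i})\in L\times[f]^\nu$ receiving list $\{\ell_{1,i_1},\ldots,\ell_{\nu,i_\nu}\}\subseteq \bigcup_s C_s^L$, and analogously for $R$. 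For each $(\vec{i},\vec{j})\in [f]^\nu\times[f]^\nu$ place a copy of $G^{(\mu)}$ on $(L\times\{\vec{i}\})\cup (R\times\{\vec{j}\})$, inheriting edge colors. These copies live on pairwise-distinct vertex pairs, so $G$ stays simple.

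Finally, given any missing edge $e$ of copy $(\vec{i},\vec{j})$ in a candidate spanner $H$, apply the vertex fault set $\bigcup_{s}(C_s^L\setminus\{\ell_{s,i_s}\})\cup \bigcup_s(C_s^R\setminus\{r_{s,j_s}\})$ of size $2\nu(f-1)$, which collapses $G$ to exactly copy $(\vec{i},\vec{j})$, and then apply the $\mu(f-1)$ edge-color faults from the first step to isolate the Erd\H{o}s block containing $e$; in total at most $(\mu+2\nu)f$ faults. Counting, $|V(G)|=f^\nu|V(G^{(\mu)})|=f^\nu n$ and $|E(G)|=f^{2\nu}|E(G^{(\mu)})|=\Omega(\min\{f^{\mu+2\nu}n^{1+1/k},\,f^{2\nu}n^2\})$, so writing $N=f^\nu n$ the edge count is $\Omega(\min\{f^{\mu+\nu(1-1/k)}N^{1+1/k},\,N^2\})$. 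Rescaling $f$ by the constant $\mu+2\nu+1$ converts this to the stated $f$-CFT bound. The main obstacle I foresee is keeping the graph simple while packing $\approx f^\mu$ Erd\H{o}s copies in step one: when $f^\mu$ exceeds $\Theta(n^{1-1/k})$ the random-bijection argument saturates, which is exactly why the $\min\{\cdot, n^2\}$ clause must appear in the statement; handling the boundary case carefully is the one delicate point I would revisit.
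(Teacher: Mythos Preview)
Your proposal is correct and follows essentially the same two-stage strategy as the paper: pack $\Theta(f^\mu)$ Erd\H{o}s copies on a common vertex set with edge-color lists, then take a $\Theta(f^\nu)\times\Theta(f^\nu)$ bipartite product to introduce vertex-color lists, and isolate any single Erd\H{o}s block by faulting all but one color per ``axis.'' The only cosmetic difference is your indexing scheme---you use $\mu$ (resp.\ $\nu$) disjoint palettes of size $f$ and index copies by tuples in $[f]^\mu$ (resp.\ $[f]^\nu$), whereas the paper uses a single palette of size $f+\mu$ (resp.\ $f+\nu$) and indexes by $\mu$-subsets (resp.\ $\nu$-subsets); this changes the fault budget from the paper's $3f$ to your $(\mu+2\nu)f$, but since $\mu,\nu$ are constants both are absorbed by the final rescaling.
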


\begin{proof}[Proof sketch.]
    First, repeat the proof of \Cref{thm:edge-color-spanner-lower-bound} to create an $n$-vertex graph $G'$ consisting of $\binom{f+\mu}{\mu}$ edge-disjoint ``Erd\H{o}s graphs" with girth $\geq 2k+2$ and $\Omega(n^{1+1/k})$ edges.
    Let $\C_E$ be a set of $f+\mu$ colors.
    Each Erd\H{o}s graph in $G'$ is associated with a $\mu$-subset of $\C_E$, which is the list of colors given to all its edges.

    Next, we modify the proof of \Cref{thm:mixed-color-spanner-lower-bound} using $G'$ as the base graph.
    Again, we may assume $G'$ is bipartite with sides $V(G') = L \cup R$.
    The lower bound graph $G$ will use the color palette $\C_E \cup \C_L \cup \C_R$, where these sets are disjoint, and $|\C_L|=|\C_R|=f+\nu$.
    The vertex set is
    \[
    V(G) = \paren{L \times \binom{\C_L}{\nu}} \cup \paren{R \times \binom{\C_R}{\nu}}.
    \]
    For each $s_L \in \binom{\C_L}{\nu}, s_R \in \binom{\C_R}{\nu}$, we place a copy of $G'$ on $(L \times \{s_L\} \cup (R \times \{s_R\})$.
    All vertices in $L \times \{s_L\}$ get $s_L$ as their color list, and similarly for $R \times \{s_R\}$.
    
    Now, let $s_E \in \binom{\C_E}{\mu}$.
    Failing the $3f$ colors $(\C_E - s_E) \cup (\C_L - s_L) \cup (\C_R - s_R)$ leaves us with a single Erd\H{o}s graph with girth $\geq 2k+2$.
    Every edge of $G$ lies in such an Erd\H{o}s graph, which shows that the $(\mu,\nu)$-list-colored graph $G$ has no proper $3f$-CFT $(2k-1)$-spanner.
    Note that $\binom{f+\mu}{\mu} = \Theta(f^\mu)$, and $\binom{f+\nu}{\nu} = \Theta(f^\nu)$ (as $\mu,\nu$ are constants).
    So, $|V(G)| = \Theta(f^\nu n)$ vertices, and $|E(G)| = \Theta(f^{2\nu + \mu} n^{1+1/k}) = \Theta(f^{\mu+\nu(1-1/k)} |V(G)|^{1+1/k})$.
    The result follows by rescaling.
\end{proof}

\end{document}